\setlist[itemize]{leftmargin=*}
\newcounter{nr}
\newcommand{\qed}{\Box}
\newenvironment{proof}{{\it Proof:}}{\hfill $\qed$ \\[2mm]}
\newenvironment{prooff}{{\it Proof:}}{}
\newtheorem{thm}{Theorem}
\newtheorem{example}{Example}[section]
\newtheorem{definition}[example]{Definition}
\newtheorem{lemma}[example]{Lemma}
\newtheorem{proposition}[example]{Proposition}
\newcommand{\true}{\textit{true}}
\newcommand{\false}{\textit{false}}
\newcommand{\mkset}[1]{\{{#1}\}}
\newcommand{\upd}[3]{{#1}[{#2} \mapsto {#3}]}
\newcommand{\stotwo}[4]{\{{#1} \mapsto {#2},\ {#3} \mapsto {#4}\}}
\newcommand{\dom}[1]{\mathit{dom}(#1)}
\newcommand{\Zz}{\mathbb{Z}}
\newcommand{\eqdef}{\mathrel{\stackrel{\mathit{def}}{=}}}
\newcommand{\eg}{{\it e.g.}}
\newcommand{\etal}{{\it et al.}}
\newcommand{\labv}[1]{\mathit{Lab}(#1)}
\newcommand{\skipv}{\mathbf{Skip}}
\newcommand{\assignv}[2]{{#1} := {#2}}
\newcommand{\rassignv}[2]{{#1} := \mathbf{Random}(#2)}
\newcommand{\observev}[1]{\mathbf{Observe}(#1)}
\newcommand{\retv}[1]{\mathbf{Return}(#1)}
\newcommand{\startv}{\mathtt{Start}}
\newcommand{\observevv}{\mathbf{Observe}}
\newcommand{\finalv}{\mathtt{End}}
\newcommand{\skipc}{\mathbf{skip}}
\newcommand{\seqc}[2]{{#1}\; ;\; {#2}}
\newcommand{\ifc}[3]{\mathbf{if}\ {#1}\ \mathbf{then}\ {#2}\ \mathbf{else}\ {#3}}
\newcommand{\whilec}[2]{\mathbf{while}\ {#1}\ \mathbf{do}\ {#2}}
\newcommand{\whileck}[3]{\mathbf{while}\ {#1}\ \mathbf{do}_{{#3}}\ {#2}}
\newcommand{\progc}[2]{{#1}\ \mathbf{return}({#2})}
\newcommand{\unodes}{\mathcal{V}}
\newcommand{\PD}{\ensuremath{\mathsf{PD}}}
\newcommand{\LAP}[2]{\ensuremath{\mathsf{LAP}({#1},{#2})}}
\newcommand{\LAPf}{\ensuremath{\mathsf{LAP}}}
\newcommand{\fppd}[1]{1PPD(#1)}
\newcommand{\chain}[2]{\{{#1} \mid {#2}\}}
\newcommand{\limit}[2]{\mathit{lim}_{#1 \rightarrow \infty}\,{#2}}
\newcommand{\contarrow}{\rightarrow_{c}}
\newcommand{\fixed}[1]{\mathit{fix}({#1})}
\newcommand{\uvar}{{\cal U}}
\newcommand{\fulls}{\mathcal{F}}
\newcommand{\Dist}{\ensuremath{{\cal D}}}
\newcommand{\sumd}[1]{\sum {#1}}
\newcommand{\Dijr}[3]{\ensuremath{D_{{#1},{#2}}^{{#3}}}}
\newcommand{\selectf}[1]{\ensuremath{\mathsf{select}_{{#1}}}}
\newcommand{\assignf}[2]{\ensuremath{\mathsf{assign}_{{#1}:={#2}}}}
\newcommand{\rassignf}[2]{\ensuremath{\mathsf{rassign}_{{#1}:={#2}}}}
\newcommand{\semee}{[\![\;]\!]}
\newcommand{\seme}[1]{[\![{#1}]\!]}
\newcommand{\semc}[1]{[\![{#1}]\!]}
\newcommand{\HH}[1]{{\cal H}_{{#1}}}
\newcommand{\HHH}{{\cal H}}
\newcommand{\hpd}[3]{{#1}^{({#2},{#3})}}
\newcommand{\hh}[2]{\hpd{h}{{#1}}{{#2}}}
\newcommand{\dt}[1]{\index{#1}\textbf{\emph{#1}}} 
\newcommand{\trl}{{\cal T}}
\def\today{\ifcase\month\or
  January\or February\or March\or April\or May\or June\or
  July\or August\or September\or October\or November\or December\fi
  \space\number\day, \number\year}
\begin{document}

\title{A Semantics for Probabilistic Control-Flow Graphs}

 \author{
\begin{tabular}{cc}
Torben Amtoft & Anindya Banerjee \\
Kansas State University & IMDEA Software Institute \\
Manhattan, KS, USA & Madrid, Spain \\
\texttt{tamtoft@ksu.edu} & \texttt{anindya.banerjee@imdea.org}
\end{tabular}
}


\maketitle

\begin{abstract}
This article develops a novel operational semantics for probabilistic control-flow graphs (pCFGs) of probabilistic imperative programs with random assignment and ``observe'' (or conditioning) statements. The semantics transforms probability distributions (on stores) as control moves from one node to another in pCFGs. We relate this semantics to a standard, expectation-transforming, denotational semantics of structured probabilistic imperative programs, by translating structured programs into (unstructured) pCFGs, and proving adequacy of the translation. This shows that the operational semantics can be used without loss of information, and is faithful to the ``intended'' semantics and hence can be used to reason about, for example, the correctness of transformations (as we do in a companion article).

\end{abstract}

\section{Introduction}
We consider structured, probabilistic imperative programs that contain random assignments and ``observe'' (or conditioning) statements, in addition to usual control structures (sequences, branches, and loops). 
The recent works of Gordon~\etal~\cite{Gor+etal:ICSE-2014} and Hur~\etal~\cite{Hur+etal:PLDI-2014} present denotational semantics of such structured programs. This semantics transforms expectation functions: given a statement $S$ and an expectation function $F'$ that gives the expected return value for a store \emph{after} execution of $S$, the semantics yields an expectation function $F$ that takes a store \emph{before} execution of $S$, and gives its expected return value.

In a companion article~\cite{Amt+Ban:ProbSlicing-2017} we consider (unstructured) probabilistic control flow graphs (pCFGs) of imperative programs: our aim there is to extend classical notions of program dependence to give a semantic foundation for the slicing of probabilistic programs represented as pCFGs. To this end, we develop a novel operational semantics of pCFGs that transforms
a probability distribution at a node (say $v$) into a probability distribution at node (say $v'$), so as to model what happens when
control moves from $v$ to $v'$ in the pCFG.

A natural question is: how are the two semantics related? To wit, consider the translation of a structured probabilistic program into a 
pCFG. Are the semantics of the source and target programs adequately related? This article answers the question in the affirmative.

The implications of adequacy are at least twofold. First it shows that no information is lost by using the operational semantics: the expectation of the initial store of a structured program can always be retrieved from the probability distribution computed by the operational semantics of the program's CFG. Secondly, for deterministic programs the two semantics coincide: if the program terminates (that is, the operational semantics computes a probability distribution that is 1 for the final store) then the expected return value of the initial store of the structured program coincides with the actual return value in the final store; if the program loops (that is, the operational semantics computes the probability distribution 0), then the expected return value of the initial store of the structured program is also 0.

The authors presented a preliminary version of the operational semantics
in \cite{Amt+Ban:FoSSaCS-2016} where it was used to reason 
about the correctness of slicing probabilistic programs. 
However, in that work, the semantics was not explicitly expressed 
as a fixed point of a functional, and no comparison was made to the semantics of a structured language.

\section{The Probabilistic Control-Flow Graph Language}
\label{sec:cfg-lang}
In this section we define our language for
expressing probabilistic programs using pCFGs.
First we present the syntax
(Section~\ref{sec:CFG}) and next the semantics
(Section~\ref{sec:sem}). 

As illustrating examples,
we shall use the pCFGs depicted in Figure~\ref{fig:ex12}.
Both make use of
a random distribution $\psi_4$ 
that distributes evenly over $\mkset{0,1,2,3}$ in that
$\psi_4(0) = \psi_4(1) = \psi_4(2) = \psi_4(3) = 
\frac{1}{4}$
whereas $\psi_4(i) = 0$ for $i \notin \{0,1,2,3\}$. 

The pCFG $G_1$ on the left uses $\psi_4$
to randomly assign values to first $x$ and next $y$; 
if the sum of those values is at least 5 
then $x$ is returned, otherwise the execution is ignored.
The only possible return values are 2 and 3, with 3 twice as likely
as 2, since an $x$-value of 3 will be accepted when $y$ is 2 or 3,
whereas an $x$-value of 2 will be accepted only when $y$ is 3.

The pCFG $G_2$ on the right uses $\psi_4$ to randomly assign
a value to $x$, whereas $y$ is assigned 0; if $x$ is 0 or 1 then
$x$ is returned but otherwise the assignment in node 5 is repeated
as long as $y$ is less than 3. Depending on what $A$ is,
this may go on forever (if $A$ is say 1) or
terminate after a bounded number of steps (if $A$ is say $y + 1$);
if $A$ is a random expression that may or may not assume a value $\geq 3$
then the cycle between nodes 4 and 5 may iterate an unbounded number of time,
but will terminate with probability 1.

\begin{figure}
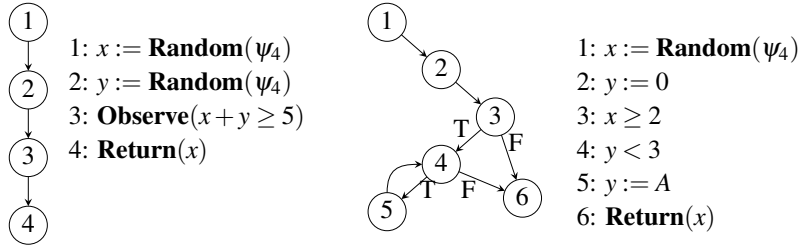

\begin{pgfpicture}{0mm}{5mm}{130mm}{35mm}
\begin{pgfmagnify}{0.9}{0.9}

\pgfnodecircle{v1}[stroke]{\pgfxy(2.5,3.5)}{8pt}

\pgfputat{\pgfxy(2.5,3.4)}{\pgfbox[center,base]{1}}

\pgfnodecircle{v2}[stroke]{\pgfxy(2.5,2.5)}{8pt}

\pgfputat{\pgfxy(2.5,2.4)}{\pgfbox[center,base]{2}}

\pgfnodecircle{v3}[stroke]{\pgfxy(2.5,1.5)}{8pt}

\pgfputat{\pgfxy(2.5,1.4)}{\pgfbox[center,base]{3}}

\pgfnodecircle{v4}[stroke]{\pgfxy(2.5,0.5)}{8pt}

\pgfputat{\pgfxy(2.5,0.4)}{\pgfbox[center,base]{4}}

\pgfsetarrowsend{stealth}
\pgfsetendarrow{\pgfarrowtriangle{4pt}}

\pgfnodeconnline{v1}{v2}

\pgfnodeconnline{v2}{v3}

\pgfnodeconnline{v3}{v4}

\pgfnodecircle{w1}[stroke]{\pgfxy(7.8,3.5)}{8pt}

\pgfputat{\pgfxy(7.8,3.4)}{\pgfbox[center,base]{1}}

\pgfnodecircle{w2}[stroke]{\pgfxy(8.6,2.8)}{8pt}

\pgfputat{\pgfxy(8.6,2.7)}{\pgfbox[center,base]{2}}

\pgfnodecircle{w3}[stroke]{\pgfxy(9.4,2.1)}{8pt}

\pgfputat{\pgfxy(9.4,2.0)}{\pgfbox[center,base]{3}}

\pgfnodecircle{w4}[stroke]{\pgfxy(8.6,1.4)}{8pt}

\pgfputat{\pgfxy(8.6,1.3)}{\pgfbox[center,base]{4}}

\pgfnodecircle{w5}[stroke]{\pgfxy(7.8,0.7)}{8pt}

\pgfputat{\pgfxy(7.8,0.6)}{\pgfbox[center,base]{5}}

\pgfnodecircle{w6}[stroke]{\pgfxy(9.8,0.9)}{8pt}

\pgfputat{\pgfxy(9.8,0.8)}{\pgfbox[center,base]{6}}

\pgfsetarrowsend{stealth}
\pgfsetendarrow{\pgfarrowtriangle{4pt}}

\pgfnodeconnline{w1}{w2}

\pgfnodeconnline{w2}{w3}

\pgfnodeconnline{w3}{w4}

\pgfputat{\pgfxy(8.9,1.8)}{\pgfbox[center,base]{T}}

\pgfputat{\pgfxy(9.7,1.6)}{\pgfbox[center,base]{F}}

\pgfnodeconnline{w3}{w6}

\pgfnodeconnline{w4}{w5}

\pgfputat{\pgfxy(8.4,0.9)}{\pgfbox[center,base]{T}}

\pgfputat{\pgfxy(9.0,0.9)}{\pgfbox[center,base]{F}}

\pgfnodeconnline{w4}{w6}

\pgfnodeconncurve{w5}{w4}{90}{180}{10pt}{10pt}

\pgfputat{\pgfxy(3.1,3.0)}{\pgfbox[left,base]{1: $\rassignv{x}{\psi_4}$}}
\pgfputat{\pgfxy(3.1,2.5)}{\pgfbox[left,base]{2: $\rassignv{y}{\psi_4}$}}
\pgfputat{\pgfxy(3.1,2.0)}{\pgfbox[left,base]{3: $\observev{x+y \geq 5}$}}
\pgfputat{\pgfxy(3.1,1.5)}{\pgfbox[left,base]{4: $\retv{x}$}}

\pgfputat{\pgfxy(10.6,3.0)}{\pgfbox[left,base]{1: $\rassignv{x}{\psi_4}$}}
\pgfputat{\pgfxy(10.6,2.5)}{\pgfbox[left,base]{2: $\assignv{y}{0}$}}
\pgfputat{\pgfxy(10.6,2.0)}{\pgfbox[left,base]{3: $x \geq 2$}}
\pgfputat{\pgfxy(10.6,1.5)}{\pgfbox[left,base]{4: $y < 3$}}
\pgfputat{\pgfxy(10.6,1.0)}{\pgfbox[left,base]{5: $\assignv{y}{A}$}}
\pgfputat{\pgfxy(10.6,0.5)}{\pgfbox[left,base]{6: $\retv{x}$}}

\end{pgfmagnify}
\end{pgfpicture}

\caption{\label{fig:ex12} 
A pCFG $G_1$ with an $\observevv$ node (left);
a pCFG $G_2$ with a cycle (right).}
\end{figure}

\subsection{Syntax}
\label{sec:CFG}

This section describes the kind of control flow graphs that we consider;
special emphasis is on the notion of postdomination.
We use Figure~\ref{fig:ex12} to motivate our approach.

A node $v \in \unodes$
can be labeled (the label is called $\labv{v}$) with an assignment
$\assignv{x}{E}$ ($x$ a program variable
and $E$ an arithmetic expression),
with a random assignment $\rassignv{x}{\psi}$
with $\psi$ a probability distribution
(which we assume
contains no program variables though it would be straightforward
to allow it as in Hur~\etal~\cite{Hur+etal:PLDI-2014}),
with a \emph{conditioning} $\observev{B}$ ($B$ is a boolean expression),
or (though not part of these examples) with $\skipv$;
a node of the abovementioned kinds has exactly one outgoing edge.
Also, there are branching nodes with \emph{two}
outgoing edges. (If $v$ has an outgoing edge to $v'$ we say that
$v'$ is a successor of $v$; a branching node has a $\true$-successor
and a $\false$-successor.) 
Finally, there is
a special node $\startv$ (which 
is numbered 1 in the examples)
from which there is a path to all other nodes,
and a unique $\finalv$ node with no outgoing edges
(if it has a label
it will be of the form $\retv{x}$)
to which there is a path from all other nodes.

\begin{definition}[Deterministic pCFG]
\label{def:determCFG}
We say that a pCFG is deterministic
if it has no $\observevv$ nodes or random assignments.
\end{definition}

\subsubsection{Postdomination}
Whereas the notion of domination has early been used
to reason about the structure of general 
control-flow graphs~\cite{Aho:86:Dragon},
the dual notion of postdomination has been a crucial concept in
standard work on dependences and slicing
(such as~\cite{Pod+Cla:TSE-1990,Bal+Hor:AAD-1993}),
and will also play a key part in our development.

\begin{definition}[Postdomination]
We say that $v_1$ \dt{postdominates} $v$, also written
$(v,v_1) \in \PD$, if $v_1$ occurs
on all paths from $v$ to $\finalv$;
if also $v_1 \neq v$,
$v_1$ is a \dt{proper postdominator} of $v$.
\end{definition}
It is easy to see that
the relation ``postdominates'' is reflexive, transitive, and antisymmetric.
\begin{lemma}
\label{lem:prec-ordering}
For given $v$, let $\prec$ be an ordering among
proper postdominators of $v$,
by stipulating that $v_1 \prec v_2$ iff in all acyclic paths from $v$
to $\finalv$, $v_1$ occurs strictly before $v_2$.
Then $\prec$ is transitive, antisymmetric, and total.
Also, if $v_1 \prec v_2$ then 
for \emph{all} paths from $v$ to $\finalv$ it is the case that
the first occurrence of $v_1$ is before the first occurrence of $v_2$.
\end{lemma}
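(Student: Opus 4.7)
The plan is to handle the four assertions in succession: transitivity and antisymmetry are immediate from the definition, totality requires a short swap argument, and the final claim about arbitrary paths is reduced to the acyclic case by cycle elimination guarded by postdomination.

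Transitivity is immediate: if $v_1$ strictly precedes $v_2$ and $v_2$ strictly precedes $v_3$ on every acyclic path from $v$ to $\finalv$, then $v_1$ strictly precedes $v_3$ on every such path. Antisymmetry is likewise immediate since ``strictly before'' on a fixed acyclic path is irreflexive.

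For totality, fix distinct proper postdominators $v_1, v_2$ of $v$ and pick any acyclic path $\pi$ from $v$ to $\finalv$, which exists because $\finalv$ is reachable from $v$. Both $v_1$ and $v_2$ occur on $\pi$ by postdomination, and each exactly once by acyclicity; without loss of generality $v_1$ precedes $v_2$ on $\pi$. I then claim $v_1 \prec v_2$: if instead some acyclic path $\pi'$ had $v_2$ before $v_1$, then concatenating the prefix of $\pi$ up to $v_1$ (which avoids $v_2$, since $v_2$ lies later on the acyclic $\pi$) with the suffix of $\pi'$ from $v_1$ onwards (which also avoids $v_2$, by the same reasoning on $\pi'$) would produce a walk from $v$ to $\finalv$ avoiding $v_2$, contradicting that $v_2$ postdominates $v$.

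For the final claim, assume $v_1 \prec v_2$ and, for contradiction, suppose some walk $\pi$ from $v$ to $\finalv$ has its first occurrence of $v_2$ strictly before its first occurrence of $v_1$ (the latter must exist, by postdomination). The prefix $\rho$ of $\pi$ up to its first $v_2$ is a walk from $v$ to $v_2$ avoiding $v_1$. Let $\sigma$ be any acyclic path from $v_2$ to $\finalv$, and form the walk $W$ by concatenating $\rho$ with $\sigma$. Then $v_2$ occurs in $W$ exactly once, namely at the junction, and every occurrence of $v_1$ in $W$ lies strictly beyond the junction. Now extract an acyclic path $\gamma$ from $W$ by the standard left-to-right shortcut procedure: append each new node; on encountering a repeat, truncate the current path back to the earlier occurrence. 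Since $v_2$ postdominates $v$, $\gamma$ must contain $v_2$; but $v_2$ appears only once in $W$ and so cannot be re-added after any truncation, which forces that no truncation ever discards $v_2$. Hence every node of $\gamma$ after $v_2$ originates from a $W$-position past the junction, so every occurrence of $v_1$ in $\gamma$ lies after $v_2$. This yields an acyclic path from $v$ to $\finalv$ on which $v_2$ precedes $v_1$, contradicting $v_1 \prec v_2$.

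The main obstacle is this last step: extracting an acyclic path from a walk while preserving the relative order of two marked nodes. The key realization is that $v_2$'s survival in $\gamma$ is forced by postdomination rather than by the extraction procedure itself, and since $v_2$ appears only once in $W$, no shortcut can jump past it.
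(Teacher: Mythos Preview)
Your proof is correct, and for transitivity, antisymmetry, and totality it matches the paper's argument essentially verbatim. The final claim, however, you prove by a genuinely different (and heavier) route.

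The paper avoids your cycle-elimination machinery entirely. Having assumed a prefix $\pi_1$ of $\pi$ reaching $v_2$ without passing through $v_1$, it simply observes that $v_1 \prec v_2$ already hands you an acyclic path $\pi_2$ from $v_2$ to $\finalv$ that \emph{avoids} $v_1$: just take any acyclic path from $v$ to $\finalv$ and keep its suffix starting at $v_2$ (since $v_1$ lies strictly before $v_2$ on that path and the path is acyclic, $v_1$ cannot reappear in the suffix). Concatenating $\pi_1$ with $\pi_2$ then gives a $v$-to-$\finalv$ walk avoiding $v_1$ altogether, contradicting that $v_1$ postdominates $v$. You instead take an \emph{arbitrary} acyclic path $\sigma$ from $v_2$, which may contain $v_1$, and must then perform the left-to-right shortcut extraction and argue that $v_2$ survives and stays ahead of $v_1$. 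That argument is sound --- the key point that $v_2$'s uniqueness in $W$ prevents any shortcut from jumping past it is correct --- but it is doing work the paper sidesteps with the one-line observation about the suffix. What your approach buys is perhaps a reusable lemma about order-preservation under cycle elimination; what the paper's approach buys is brevity.
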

\begin{proof}
The first two properties are obvious.

We next show that $\prec$ is total.
Assume, to get a contradiction, that there exists an 
acyclic path $\pi_1$ from $v$ to $\finalv$
that contains $v_1$ strictly before $v_2$,
and also an
acyclic path $\pi_2$ from $v$ to $\finalv$
that contains $v_2$ strictly before $v_1$.
But then the concatenation of the prefix of $\pi_1$ that ends with $v_1$,
and the suffix of $\pi_2$ that starts with $v_1$, is a path
from $v$ to $\finalv$
that avoids $v_2$, yielding a contradiction
as $v_2$ postdominates $v$.

Finally, assume that $v_1 \prec v_2$, and that $\pi$ is a path
from $v$ to $\finalv$; to get a contradiction, assume that
there is a prefix $\pi_1$ of $\pi$ that ends with $v_2$ but does 
not contain $v_1$. Since there exists an acyclic path
from $v$ to $\finalv$, we infer from $v_1 \prec v_2$ that
there is an acyclic path $\pi_2$ from $v_2$ that does not contain $v_1$.
But the concatenation of $\pi_1$ and $\pi_2$ is a path from
$v$ to $\finalv$ that does not contain $v_1$,
which contradicts $v_1$ being a proper postdominator of $v$.
\end{proof}
We say that $v_1$ is the \dt{first proper postdominator} of $v$
if whenever $v_2$ is another proper postdominator of $v$
then all paths from $v$ to $v_2$ contain $v_1$.
\begin{lemma}
For any $v$ with $v \neq \finalv$,
there is a unique first proper postdominator of $v$.
\end{lemma}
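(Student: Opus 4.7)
The plan is to leverage the total ordering $\prec$ from Lemma~\ref{lem:prec-ordering} to pick the $\prec$-minimum proper postdominator and show it fits the required definition.

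First, I would observe that the set of proper postdominators of $v$ is nonempty: since $v \neq \finalv$ and there is a path from $v$ to $\finalv$, the node $\finalv$ itself postdominates $v$ and differs from it. Since $\unodes$ is finite, this nonempty set has a $\prec$-minimum element by Lemma~\ref{lem:prec-ordering} (as $\prec$ is total and antisymmetric on a finite set). Call this minimum $v_1$.

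Next I would show that $v_1$ is a first proper postdominator. Pick any other proper postdominator $v_2$, so $v_1 \prec v_2$ by minimality. Let $\pi$ be any path from $v$ to $v_2$. Because $\finalv$ is reachable from every node, we can concatenate $\pi$ with some path $\rho$ from $v_2$ to $\finalv$, yielding a path $\pi' = \pi \cdot \rho$ from $v$ to $\finalv$. By the final clause of Lemma~\ref{lem:prec-ordering}, the first occurrence of $v_1$ in $\pi'$ precedes the first occurrence of $v_2$ in $\pi'$. But $\pi$ itself ends at $v_2$, so the first occurrence of $v_2$ in $\pi'$ lies within $\pi$; hence $v_1$ occurs in $\pi$ (before reaching $v_2$). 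This shows every path from $v$ to $v_2$ contains $v_1$, as required.

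For uniqueness, suppose $v_1'$ is also a first proper postdominator; I would argue that $v_1'$ must likewise be $\prec$-minimum, and then appeal to antisymmetry of $\prec$. Concretely, if $v_1'$ were not $\prec$-minimum then some proper postdominator $v_2$ would satisfy $v_2 \prec v_1'$; taking an acyclic path from $v$ to $\finalv$ and cutting it off at its first occurrence of $v_2$ produces a path from $v$ to $v_2$ that avoids $v_1'$ (since $v_2$ comes strictly before $v_1'$ on that acyclic path), contradicting that $v_1'$ is a first proper postdominator. Hence $v_1' = v_1$.

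The only subtle point is the second step: I have to be careful that the first occurrence of $v_2$ in the extended path $\pi'$ really does lie inside the prefix $\pi$, and that Lemma~\ref{lem:prec-ordering} indeed governs the relative order of \emph{first} occurrences (which it does, exactly to cover the case where $\pi$ might revisit nodes). The rest is just bookkeeping with the total order $\prec$.
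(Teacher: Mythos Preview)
Your proof is correct and follows the same approach as the paper: both rely on Lemma~\ref{lem:prec-ordering} together with the observation that $\finalv$ is always a proper postdominator of $v$. The paper's proof is a one-line appeal to that lemma, while you have carefully spelled out how the $\prec$-minimum element satisfies the definition of first proper postdominator and why it is unique; these are exactly the details implicit in the paper's terse argument.
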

\begin{proof}
This follows from Lemma~\ref{lem:prec-ordering}
since the set of proper postdominators is
non-empty (will at least contain $\finalv$).
\end{proof}
\begin{definition}
For $v \neq \finalv$, we
write $\fppd{v}$ for the unique first proper postdominator of $v$.
\end{definition}
In Figure~\ref{fig:ex12}(right), $\fppd{1} = 2$
(while also nodes 3 and 6 are proper postdominators of 1)
and $\fppd{3} = 6$.

\subsubsection{Nodes that Induce Cycles}
In order to inductively define functions on pairs 
in $\PD$, we need to define a measure on such pairs:

\begin{definition}[\LAPf]
For $(v,v') \in \PD$, we define $\LAP{v}{v'}$ as the maximum length
of an acyclic path from $v$ to $v'$. 
(The length of a path is the number of edges.)
\end{definition}
Thus $\LAP{v}{v} = 0$ for all nodes $v$. 
As expected, we have:
\begin{lemma}
\label{lem:LAP-add}
If $(v,v_1) \in \PD$ and $(v_1,v_2) \in \PD$ (and thus
$(v,v_2) \in \PD$) then $\LAP{v}{v_2} = \LAP{v}{v_1} + \LAP{v_1}{v_2}$.
\end{lemma}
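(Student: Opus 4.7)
The plan is to prove the equality by two inequalities after reducing to a nontrivial case. First I would dispose of the degenerate situations: if $v=v_1$ then $\LAP{v}{v_1}=0$ and the equation collapses to $\LAP{v}{v_2}=\LAP{v_1}{v_2}$; symmetrically if $v_1=v_2$. So assume $v,v_1,v_2$ are pairwise distinct, so that $v_1$ and $v_2$ are both \emph{proper} postdominators of $v$.

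The next step is to establish that $v_1 \prec v_2$ in the ordering of Lemma~\ref{lem:prec-ordering}. Given any acyclic path from $v$ to $\finalv$, it must traverse $v_1$ because $(v,v_1)\in\PD$; its suffix starting at $v_1$ is still acyclic and goes to $\finalv$, so by $(v_1,v_2)\in\PD$ it must traverse $v_2$. Hence $v_1$ appears strictly before $v_2$ in every acyclic path from $v$ to $\finalv$, which is exactly $v_1\prec v_2$. By Lemma~\ref{lem:prec-ordering} this upgrades to: in \emph{every} path from $v$ to $\finalv$, the first occurrence of $v_1$ precedes the first occurrence of $v_2$.

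For the inequality $\LAP{v}{v_2}\le \LAP{v}{v_1}+\LAP{v_1}{v_2}$, take any acyclic path $\pi$ from $v$ to $v_2$ and extend it by any path from $v_2$ to $\finalv$ to get a path from $v$ to $\finalv$. By the previous paragraph its first $v_1$ precedes its first $v_2$, and the first $v_2$ already appears at the end of $\pi$, so $\pi$ itself must contain $v_1$. Splitting $\pi$ at this (unique, by acyclicity) occurrence of $v_1$ yields acyclic subpaths from $v$ to $v_1$ and from $v_1$ to $v_2$, whose lengths are bounded by $\LAP{v}{v_1}$ and $\LAP{v_1}{v_2}$ respectively, giving the required bound.

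For the reverse inequality $\LAP{v}{v_2}\ge \LAP{v}{v_1}+\LAP{v_1}{v_2}$, I would take maximum-length acyclic paths $\pi_1$ from $v$ to $v_1$ and $\pi_2$ from $v_1$ to $v_2$ and concatenate them; the work is to show the concatenation is acyclic, i.e.\ that $\pi_1$ and $\pi_2$ share no node other than $v_1$. This is the main obstacle. Suppose for contradiction some $u\ne v_1$ lies on both. Since $\pi_1$ is acyclic and ends at $v_1$, that occurrence of $v_1$ is only at its end, so $u$ occurs strictly before $v_1$ in $\pi_1$; symmetrically $u$ occurs strictly after $v_1$ in $\pi_2$. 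Then splice the prefix of $\pi_1$ up to $u$ with the suffix of $\pi_2$ from $u$ to $v_2$, followed by any path from $v_2$ to $\finalv$: this yields a path from $v$ to $\finalv$ in which $v_1$ does not appear until (possibly) inside the final extension, while $v_2$ appears at the junction with the extension, so the first $v_2$ precedes the first $v_1$. This contradicts $v_1 \prec v_2$ via Lemma~\ref{lem:prec-ordering}. Hence the concatenation is acyclic, of length $\LAP{v}{v_1}+\LAP{v_1}{v_2}$, so $\LAP{v}{v_2}$ is at least this sum. Combining both inequalities completes the proof.
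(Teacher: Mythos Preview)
Your proof is correct and follows essentially the same route as the paper: reduce to the case where $v_1,v_2$ are proper postdominators of $v$, establish $v_1\prec v_2$, then prove each inequality by splitting (for $\le$) and concatenating (for $\ge$) acyclic paths, with acyclicity of the concatenation shown by the same splice-to-avoid-$v_1$ contradiction. If anything, you are more careful than the paper in explicitly arguing why $v_1\prec v_2$ holds, which the paper asserts without justification.
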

\begin{proof}
If $v = v_1$ or $v_1 = v_2$, the claim is obvious;
we can thus assume that $v_1$ and $v_2$ are proper postdominators of $v$
and that $v_1 \prec v_2$ in the total ordering among those so
by Lemma~\ref{lem:prec-ordering} we see that $v_1$
will occur before $v_2$ in all paths from $v$ to $\finalv$.

First consider an acyclic path $\pi$ from $v$ to $v_2$. 
We have argued that $\pi$ will contain $v_1$,
and hence $\pi$ is the concatenation of an acyclic path from
$v$ to $v_1$, thus of length $\leq \LAP{v}{v_1}$,
and an acyclic path from $v_1$ to $v_2$, thus of length $\leq \LAP{v_1}{v_2}$.
Thus the length of $\pi$ is at most $\LAP{v}{v_1} + \LAP{v_1}{v_2}$;
as $\pi$ was an arbitrary acyclic path from $v$ to $v_2$, 
this shows ``$\leq$''. 

To show ``$\geq$'', let $\pi_1$ be an acyclic path from
$v$ to $v_1$ of length $\LAP{v}{v_1}$, and $\pi_2$ be an acyclic path from
$v_1$ to $v_2$ of length $\LAP{v_1}{v_2}$. 
Let $\pi$ be the concatenation
of $\pi_1$ and $\pi_2$;
$\pi$ is an acyclic path from $v$ to $v_2$
since if $v' \neq v_1$ occurs in both
paths then there is a path from $v$ to $v_2$ that avoids $v_1$
which is a contradiction. 
As $\pi$ is of length 
$\LAP{v}{v_1} + \LAP{v_1}{v_2}$, this shows ``$\geq$''.
\end{proof}

To reason about cycles, 
it is useful to pinpoint the kind of nodes that cause cycles:
\begin{definition}[Cycle-inducing]
\label{def:cycle-induce}
A node $v$ is \emph{cycle-inducing} if with $v' = \fppd{v}$ there exists a successor $v_i$
of $v$ such that $\LAP{v_i}{v'} \geq \LAP{v}{v'}$.
\end{definition}
Note that if $v$ is cycle-inducing then $v$ must be a branching node
(since if $v$ has only one successor then that successor is $v'$).
\begin{example}
\label{ex:cycle-induce}
In Figure~\ref{fig:ex12}(right), there are two branching nodes,
3 and 4, both having node 6 as their first proper postdominator.
Node 4 is cycle-inducing, since
5 is a successor of 4 with
$\LAP{5}{6} = 2 > 1 = \LAP{4}{6}$. 
On the other hand, node 3 is not cycle-inducing, since
$\LAP{3}{6} = 2$ which is strictly greater
than $\LAP{4}{6}$ ($=1$) and $\LAP{6}{6}$ ($=0$).
\end{example}
\begin{lemma}
\label{lem:LAPcycle}
If $v$ is cycle-inducing then there exists a cycle that contains $v$
but not $\fppd{v}$.
\end{lemma}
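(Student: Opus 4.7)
The plan is to use the successor witnessing the cycle-inducing property, together with a maximum-length acyclic path from that successor to $\fppd{v}$, and argue by length comparison that $v$ itself must lie on this path; the resulting detour through $v$ is the desired cycle.

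Let $v' = \fppd{v}$ and let $v_i$ be a successor of $v$ with $\LAP{v_i}{v'} \geq \LAP{v}{v'}$. First I would verify that $(v_i,v') \in \PD$ so that $\LAP{v_i}{v'}$ is actually well-defined: any path from $v_i$ to $\finalv$, prefixed by the edge $v \to v_i$, is a path from $v$ to $\finalv$ and must therefore contain $v'$; moreover $v_i \neq v'$, for otherwise $\LAP{v_i}{v'} = 0$ while $\LAP{v}{v'} \geq 1$ (as $v' \neq v$), contradicting the cycle-inducing inequality.

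Next I would pick an acyclic path $\pi$ from $v_i$ to $v'$ of length exactly $\LAP{v_i}{v'}$ and do a case split on whether $v$ occurs on $\pi$. If $v$ does not occur on $\pi$, then prepending the edge $v \to v_i$ yields an acyclic path from $v$ to $v'$ of length $1 + \LAP{v_i}{v'} \geq 1 + \LAP{v}{v'} > \LAP{v}{v'}$, contradicting the definition of $\LAP{v}{v'}$ as a maximum. Hence $v$ must lie on $\pi$, and we can write $\pi = v_i \to \cdots \to v \to \cdots \to v'$. Prepending the edge $v \to v_i$ to the prefix $v_i \to \cdots \to v$ yields a cycle through $v$. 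Since $\pi$ is acyclic, $v'$ occurs on $\pi$ only at its final position; in particular $v'$ does not appear in the prefix $v_i \to \cdots \to v$, so the cycle avoids $v'$.

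The main obstacle is essentially the length-based argument forcing $v$ onto $\pi$; once that is in place, the extraction of the cycle is just path surgery. The only other thing to be careful about is confirming that $\LAP{v_i}{v'}$ is defined, which I handle at the outset via postdomination.
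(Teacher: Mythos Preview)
Your proposal is correct and follows essentially the same route as the paper's proof: take the witnessing successor $v_i$, pick a longest acyclic path $\pi$ from $v_i$ to $v'=\fppd{v}$, and use the length bound to force $v$ onto $\pi$, giving the cycle. You are in fact more explicit than the paper on two points it leaves implicit --- why $v'$ postdominates $v_i$ (so that $\LAP{v_i}{v'}$ is defined), and why the resulting cycle avoids $v'$ (because $\pi$ is acyclic, so $v'$ appears only at its end).
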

\begin{proof}
With $v' = \fppd{v}$, by assumption there exists 
a successor $v_i$ of $v$ such that $\LAP{v_i}{v'} \geq \LAP{v}{v'}$;
observe that $v'$ is a postdominator of $v_i$.
Let $\pi$ be an acyclic path from $v_i$ to $v'$ with length
$\LAP{v_i}{v'}$;
then the path $v \pi$ is a path from $v$ to $v'$ that is longer
than $\LAP{v_i}{v'}$, and thus also longer
than $\LAP{v}{v'}$. This shows that $v \pi$ cannot be acyclic;
hence $v \in \pi$ and thus $v \pi$ contains a cycle involving $v$ but not $v'$.
\end{proof}
\begin{lemma}
\label{lem:cycle1inducing}
All cycles will contain at least one node which is cycle-inducing.
\end{lemma}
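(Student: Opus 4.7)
The plan is to argue by contradiction. Let $C = v_0 \to v_1 \to \cdots \to v_{k-1} \to v_0$ be a cycle in which no node is cycle-inducing, and define the potential $\phi(v) = \LAP{v}{\finalv}$. My goal is to show that if $v_i \in C$ is not cycle-inducing and $v_{i+1}$ is its successor along $C$, then $\phi(v_{i+1}) < \phi(v_i)$; chaining this strict descent around $C$ yields $\phi(v_0) < \phi(v_0)$, which is absurd.

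To legitimize the choice of $\finalv$ as the reference point, I observe that $\finalv$ has no outgoing edges, so $\finalv$ cannot appear on any cycle, and hence $\finalv$ is a proper postdominator of every node on $C$. I will also need that $\fppd{v_i}$ postdominates the next cycle node $v_{i+1}$: any path from $v_{i+1}$ to $\finalv$ becomes, after prepending the edge $v_i \to v_{i+1}$, a path from $v_i$ to $\finalv$, and this extended path must pass through $\fppd{v_i}$; since $\fppd{v_i} \neq v_i$, it must lie on the subpath starting at $v_{i+1}$.

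The main technical step is to translate the ``not cycle-inducing'' inequality $\LAP{v_{i+1}}{\fppd{v_i}} < \LAP{v_i}{\fppd{v_i}}$ into one about $\phi$. Applying Lemma~\ref{lem:LAP-add} twice, with $\fppd{v_i}$ as the intermediate postdominator and $\finalv$ as the target, gives $\LAP{v_i}{\finalv} = \LAP{v_i}{\fppd{v_i}} + \LAP{\fppd{v_i}}{\finalv}$ and $\LAP{v_{i+1}}{\finalv} = \LAP{v_{i+1}}{\fppd{v_i}} + \LAP{\fppd{v_i}}{\finalv}$. Subtracting cancels the common term $\LAP{\fppd{v_i}}{\finalv}$, leaving $\phi(v_{i+1}) - \phi(v_i) = \LAP{v_{i+1}}{\fppd{v_i}} - \LAP{v_i}{\fppd{v_i}} < 0$, as required.

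The conceptual obstacle I anticipate is recognizing that, even though $\fppd{v_i}$ varies with $i$, switching to a \emph{single} common postdominator ($\finalv$) produces a potential whose per-edge decrements telescope around the cycle; once that insight is in hand, Lemma~\ref{lem:LAP-add} handles the rest and the contradiction is immediate.
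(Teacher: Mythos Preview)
Your proof is correct and is essentially the contrapositive of the paper's direct argument: both define the potential $\LAP{\cdot}{\finalv}$ and use Lemma~\ref{lem:LAP-add} to convert the inequality at $\fppd{v_i}$ into one at $\finalv$. The paper argues directly (a strictly decreasing potential cannot close a cycle, so some edge $v_0 \to v_i$ has $f(v_i) \geq f(v_0)$, and then Lemma~\ref{lem:LAP-add} shows $v_0$ is cycle-inducing), whereas you run the same computation under a contradiction hypothesis; the content is the same.
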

\begin{proof}
Let a cycle $\pi$ be given. For each $v \in \pi$, define 
$f(v)$ as $\LAP{v}{\finalv}$. For a node $v$ that 
has only one successor, $v_1$,
it holds that $f(v_1) < f(v)$ (since by Lemma~\ref{lem:LAP-add} we 
have $f(v) = \LAP{v}{v_1} + f(v_1) = 1 + f(v_1)$).
Thus $\pi$ must contain a branching node $v_0$ with a successor $v_i$
such that $f(v_i) \geq f(v_0)$.
But with $v' = \fppd{v_0}$ we then have (by Lemma~\ref{lem:LAP-add})
\[
\LAP{v_i}{v'} = f(v_i) - f(v')
\geq f(v_0) - f(v')
= \LAP{v_0}{v'}
\]
which shows that $v_0$ is cycle inducing.
\end{proof}

\subsection{Semantics}
\label{sec:sem}
In this section we shall define the meaning of the pCFGs introduced
in Section~\ref{sec:CFG}, in terms of an operational semantics 
that manipulates \emph{distributions} which
as described in Section~\ref{subsec:sem-dist}
assign probabilities to stores.
In Section~\ref{subsec:sem-domain} we shall explore the structure
of distributions, and operators on such, using
basic concepts from domain theory as summarized in Section~\ref{subsec:domain}.
In Section~\ref{subsec:sem-one} we shall define the semantics
of traversing one edge of a pCFG; based on that,
in Section~\ref{subsec:toplevel} we shall present a functional
the fixed point of which provides the meaning of a pCFG. 
In Section~\ref{subsec:sem-ex} we shall illustrate the semantics
on the pCFGs from Figure~\ref{fig:ex12}.

\subsubsection{Stores and Distributions}
\label{subsec:sem-dist}
Let $\uvar$ be the universe of variables.
A store $s \in \fulls$ is a mapping from $\uvar$ to $\Zz$.
We write $\upd{s}{x}{z}$ for the store $s'$ that is like $s$
except $s'(x) = z$.
We assume that there is a function $\semee$
such that $\seme{E}s$ is the integer result of evaluating $E$ in store $s$
and $\seme{B}s$ is the boolean result of evaluating $B$ in store $s$
(the free variables of $E$ and $B$ must be in $\dom{s}$).

A distribution $D \in \Dist$ 
is a mapping from $\fulls$ to non-negative reals.
We shall often expect that $D$ is \emph{bounded}, that is
$\sumd{D} \leq 1$
where $\sumd{D}$ is a shorthand for $\sum_{s \in \fulls}D(s)$.
Thanks to our assumption that values are integers, and since $\uvar$
can be assumed finite,
$\fulls$ is a countable set and thus $\sumd{D}$ is well-defined
even without measure theory.
We define $D_1 + D_2$ by stipulating
$(D_1 + D_2)(s) = D_1(s) + D_2(s)$
(if $\sumd{D_1} + \sumd{D_2} \leq 1$ then $D_1 + D_2$ is bounded),
and for $c \geq 0$ we define $cD$ by stipulating 
$(cD)(s) = cD(s)$ (if $D$ is bounded and $c \leq 1$ then $cD$ is bounded);
we write $D = 0$ when $D(s) = 0$ for all $s$.
We say that $D$ is \emph{concentrated} if there exists $s_0 \in \fulls$
such that $D(s) = 0$ for all $s \in \fulls$ with $s \neq s_0$;
for that $s_0$, we say that $D$ is concentrated on $s_0$.
(Thus the distribution $0$ is concentrated on everything.)

\subsubsection{Basic Domain Theory}
\label{subsec:domain}
To prepare for our development, we shall
now recall some key notions of domain theory,
as presented in, \eg,
\cite{Schmidt:DenSemantics,Winskel:semantics}.

Let $(X, \preceq)$ be a partially ordered set. 
A \emph{chain} $\chain{x_k}{k}$ in $X$ is a mapping from the natural numbers into $X$
such that if $i < j$ then $x_i \preceq x_j$. The chain 
has a \emph{least upper bound} $x \in X$, if $x_k \preceq x$ for all $k$,
and also if $x_k \preceq y$ for all $k$ then $x \preceq y$. (The least upper bound is also called \emph{limit} and written $\limit{k}{x_k}$.) 
Finally, say that $(X, \preceq)$ is a \emph{complete partial order (cpo)} if every chain $\chain{x_k}{k}$ in $X$ has a least upper bound.
We say that a cpo is a \emph{pointed} cpo if there exists a least
element (also called \emph{bottom}), 
that is an element $\bot$ such that $\bot \preceq x$ 
for all $x \in X$.

A function $f$ from a cpo $X$ to a cpo $Y$ is \emph{continuous}
if for each chain $\chain{x_k}{k}$ in $X$ the following holds:
$\chain{f(x_k)}{k}$ is a chain in $Y$,
and $\limit{k}{f(x_k)} = f(\limit{k}{x_k})$.
We let $X \contarrow Y$ denote the set of 
continuous functions from $X$ to $Y$.
A continuous function $f$ is also monotone, that is $f(x_1) \preceq f(x_2)$
when $x_1 \preceq x_2$
(for then $x_1,x_2,x_2,x_2....$ is a chain
and by continuity thus
$f(x_2)$ is the least upper bound of $f(x_1),f(x_2)$
implying $f(x_1) \preceq f(x_2)$).
For a function $f$, and for each $k \geq 0$ 
define $f^k$ by stipulating
$f^0(x) = x$, and $f^{k+1}(x) = f(f^{k}(x))$ for $k \geq 0$.
For a function $f$ from a cpo $Y$ into itself,
a \emph{fixed point} of $f$ is an element $y \in Y$ such that
$f(y) = y$.
\begin{lemma}
\label{lem:cont-cpo-fix}
Let $f$ be a continuous function on
a pointed cpo $X$. 
Then $\chain{f^k(\bot)}{k}$ is a chain,
and $\limit{k}{f^k(\bot)}$ is the least fixed point of $f$.
\end{lemma}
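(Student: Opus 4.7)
The plan is to carry out the standard Kleene fixed-point argument in three stages: first establish that the iterates form a chain, then show the limit is a fixed point using continuity, and finally show it is the least one by induction against any competing fixed point.

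First I would verify that $\chain{f^k(\bot)}{k}$ is a chain by induction on $k$. The base step uses only that $\bot$ is the least element: $\bot \preceq f(\bot)$, i.e.\ $f^0(\bot) \preceq f^1(\bot)$. For the inductive step, assume $f^k(\bot) \preceq f^{k+1}(\bot)$; since continuity of $f$ entails monotonicity (as noted in the excerpt), applying $f$ to both sides gives $f^{k+1}(\bot) \preceq f^{k+2}(\bot)$. Hence the sequence is ascending, and because $X$ is a cpo, the limit $y \eqdef \limit{k}{f^k(\bot)}$ exists.

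Second, I would show $f(y) = y$. Continuity of $f$ applied to the chain $\chain{f^k(\bot)}{k}$ yields that $\chain{f(f^k(\bot))}{k} = \chain{f^{k+1}(\bot)}{k}$ is itself a chain and that $f(y) = f(\limit{k}{f^k(\bot)}) = \limit{k}{f^{k+1}(\bot)}$. It then remains to observe that the shifted chain $\chain{f^{k+1}(\bot)}{k}$ has the same least upper bound as $\chain{f^k(\bot)}{k}$: any upper bound of one is an upper bound of the other, since the two sequences differ only by the extra term $\bot$ at the bottom, which is dominated by every other element. Thus $f(y) = y$.

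Third, I would show $y$ is the least fixed point. Let $z \in X$ be any fixed point, i.e.\ $f(z) = z$. By induction on $k$, I would prove $f^k(\bot) \preceq z$: the base case $\bot \preceq z$ holds by definition of $\bot$; for the inductive step, $f^k(\bot) \preceq z$ implies $f^{k+1}(\bot) = f(f^k(\bot)) \preceq f(z) = z$ by monotonicity and the fixed-point property of $z$. Thus $z$ is an upper bound of the chain, so $y \preceq z$ by definition of least upper bound.

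No step is genuinely hard; the only place to be careful is the chain-shift argument in the second stage, where one must explicitly invoke that adding or removing an initial element dominated by the rest of the chain does not change the supremum. Everything else is a direct application of the definitions of pointed cpo, continuity, and monotonicity already in place.
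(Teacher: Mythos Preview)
Your proposal is correct and follows essentially the same Kleene fixed-point argument as the paper, with the same three stages (chain, fixed point via continuity, leastness via induction). You are somewhat more explicit than the paper about the chain-shift identity $\limit{k}{f^{k+1}(\bot)} = \limit{k}{f^k(\bot)}$ and about the inductions, but the underlying reasoning is identical.
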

\begin{proof}
From $\bot \preceq f(\bot)$ we use monotonicity of $f$ to infer that
$f^k(\bot) \preceq f^{k+1}(\bot)$ for all $k$
so $\chain{f^k(\bot)}{k}$ is indeed a chain.
With $x = \limit{k}{f^k(\bot)}$ we see by continuity of $f$ 
that $x$ is indeed a fixed point of $f$:
$f(x)= \limit{k}{f^{k+1}(\bot)} = x$.
And if $y$ is also a fixed point, we have
$\bot \preceq y$ and by monotonicity of $f$ thus
$f^k(\bot) \preceq f^{k}(y) = y$ for all $k$, from which
we infer $x \preceq y$.
\end{proof}
\begin{lemma}
\label{lem:cont-cpo}
Let $X$ and $Y$ be cpos. Then $X \contarrow Y$
is a cpo (and pointed if $Y$ is), 
with ordering and limit defined pointwise: $f_1 \preceq f_2$
iff $f_1(x) \preceq f_2(x)$ for all $x \in X$,
and $f = \limit{k}{f_k}$ if $f$ is 
such that $f(x) = \limit{k}{(f_k(x))}$ for all $x \in X$.
\end{lemma}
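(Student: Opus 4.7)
The plan is to verify, in order, that the pointwise relation on $X \contarrow Y$ is a partial order, that the candidate pointwise definition of the limit of a chain actually lands in $X \contarrow Y$, that it is in fact the least upper bound, and finally that pointedness of $Y$ transfers to $X \contarrow Y$.

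First I would observe that reflexivity, transitivity, and antisymmetry of the pointwise $\preceq$ follow immediately from the corresponding properties in $Y$, so this is a partial order. Next, given a chain $\chain{f_k}{k}$ in $X \contarrow Y$, the pointwise ordering ensures that for each fixed $x \in X$ the family $\chain{f_k(x)}{k}$ is a chain in $Y$, so $f(x) \eqdef \limit{k}{f_k(x)}$ is well defined. The main task is to show that this $f$ is continuous, i.e., lies in $X \contarrow Y$.

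The principal obstacle is exactly this continuity check, which is the classical ``interchange of two least upper bounds.'' Given a chain $\chain{x_j}{j}$ in $X$ with limit $x$, set $y_{k,j} = f_k(x_j)$. By monotonicity of each $f_k$ (from Lemma~\ref{lem:cont-cpo-fix}'s predecessor, i.e., the monotonicity consequence of continuity) and monotonicity of the chain $\chain{f_k}{k}$, $y_{k,j} \preceq y_{k',j'}$ whenever $k \le k'$ and $j \le j'$; in particular, both $\chain{y_{k,j}}{j}$ for fixed $k$ and $\chain{y_{k,j}}{k}$ for fixed $j$ are chains in $Y$, and so are their pointwise limits as the other index varies. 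Using continuity of each $f_k$, $f_k(x) = \limit{j}{y_{k,j}}$, so $f(x) = \limit{k}{\limit{j}{y_{k,j}}}$, while $\limit{j}{f(x_j)} = \limit{j}{\limit{k}{y_{k,j}}}$. To show these iterated limits coincide, fix $k$; since $y_{k,j} \preceq \limit{k'}{y_{k',j}}$ for every $j$, the value $\limit{j}{\limit{k'}{y_{k',j}}}$ is an upper bound for $\chain{y_{k,j}}{j}$, hence dominates $\limit{j}{y_{k,j}}$; taking the lub over $k$ yields $\limit{k}{\limit{j}{y_{k,j}}} \preceq \limit{j}{\limit{k}{y_{k,j}}}$, and the reverse inequality is symmetric. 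Thus $f$ is continuous.

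Finally, $f$ is an upper bound of $\chain{f_k}{k}$ by construction, and if $g \in X \contarrow Y$ dominates every $f_k$ pointwise then for each $x$, $g(x)$ bounds the chain $\chain{f_k(x)}{k}$ and hence its lub $f(x)$, giving $f \preceq g$; so $f$ is the least upper bound. If $Y$ has bottom $\bot$, the constant function $x \mapsto \bot$ is trivially continuous (every chain is mapped to the constant chain $\bot, \bot, \ldots$, whose lub is $\bot$) and is clearly the least element of $X \contarrow Y$, completing the argument.
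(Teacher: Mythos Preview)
Your proof is correct and follows essentially the same approach as the paper: define $f$ pointwise, verify continuity of $f$ via the interchange of iterated least upper bounds, and observe that $\lambda x.\bot$ serves as bottom. The paper's version is terser---it simply writes the chain of equalities $f(\limit{k}{x_k}) = \limit{m}{\limit{k}{f_m(x_k)}} = \limit{k}{\limit{m}{f_m(x_k)}} = \limit{k}{f(x_k)}$ without justifying the swap---whereas you spell out the double-monotonicity argument that legitimizes it, and you also make explicit the partial-order and least-upper-bound verifications that the paper leaves implicit.
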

\begin{proof}
For a chain $\chain{k}{f_k} \in X \contarrow Y$, 
$f$ as defined in the lemma text
is clearly a least upper bound,
provided we can show
that $f$ is continuous.
But if $\chain{x_k}{k}$ is a chain in $X$ then
\begin{eqnarray*}
f(\limit{k}{x_k}) & = & \limit{m}{f_m(\limit{k}{x_k})}
 = \limit{m}{\limit{k}{f_m(x_k)}}
=  \limit{k}{\limit{m}{f_m(x_k)}}
\\ & = & \limit{k}{f(x_k)}.
\end{eqnarray*}
If $Y$ has a bottom element $\bot$ then
$\lambda x.\bot$ is the bottom element in
$X \contarrow Y$.
\end{proof}
\subsubsection{Relating to Domain Theory}
\label{subsec:sem-domain}
The set of real numbers in $[0..1]$ form a {\em pointed cpo}
with the usual ordering, as 0 is the bottom element and
the supremum operator yields the least upper bound of a chain.
Hence also the set $\Dist$ of distributions form a pointed cpo,
with ordering defined pointwise ($D_1 \preceq D_2$ iff
$D_1(s) \leq D_2(s)$ for all stores $s$), with 0 the bottom element,
and the least upper bound defined pointwise. 
\begin{lemma}
\label{lem:space-cpo}
The set $\Dist \contarrow \Dist$ is a pointed cpo,
as is the set $\PD \contarrow \Dist \contarrow \Dist$
where $\PD$ is considered a discrete cpo.
\end{lemma}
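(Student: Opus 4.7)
My plan is to derive both claims as straightforward applications of Lemma~\ref{lem:cont-cpo}, which is the lifting result for continuous function spaces. Since the paragraph preceding the lemma already establishes that $\Dist$ is a pointed cpo (with pointwise ordering, zero as bottom, and pointwise suprema), the first claim follows by instantiating Lemma~\ref{lem:cont-cpo} with $X = Y = \Dist$: the hypothesis is met, and the conclusion gives that $\Dist \contarrow \Dist$ is a cpo, pointed because $\Dist$ is (with $\lambda D.\,0$ as bottom).

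For the second claim, I would first observe that $\PD$, viewed as a discrete cpo, is indeed a cpo: with the discrete order (equality), any chain $\chain{x_k}{k}$ is forced to be constant, because $x_i \preceq x_j$ for $i < j$ can only mean $x_i = x_j$, so the common value serves as the least upper bound. Then I would apply Lemma~\ref{lem:cont-cpo} a second time with $X = \PD$ and $Y = \Dist \contarrow \Dist$; by the first part, $Y$ is a pointed cpo, so the lemma yields that $\PD \contarrow Y = \PD \contarrow \Dist \contarrow \Dist$ is a pointed cpo as well (the bottom element being the constant function mapping every pair in $\PD$ to $\lambda D.\,0$).

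There is essentially no obstacle, because Lemma~\ref{lem:cont-cpo} does all the real work; the only thing worth remarking on is that continuity for functions out of the discrete cpo $\PD$ is automatic, since every chain in $\PD$ is constant and any function trivially preserves constant limits. I would include this brief remark to justify the \emph{discrete cpo} phrasing, but would not belabor it. The proof should therefore be only a few lines long, essentially consisting of two invocations of Lemma~\ref{lem:cont-cpo} together with the observation that $\PD$ with the equality order is a cpo.
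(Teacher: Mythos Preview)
Your proposal is correct and follows exactly the paper's approach: the paper's proof is the single sentence ``This follows from Lemma~\ref{lem:cont-cpo}.'' Your version merely unpacks this into the two obvious instantiations and adds the (harmless) remark that chains in a discrete cpo are constant.
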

\begin{proof}
This follows from Lemma~\ref{lem:cont-cpo}.
\end{proof}
Observe that the ordering $\preceq$ on 
$\PD \contarrow \Dist \contarrow \Dist$ is determined as follows:
$h_1 \preceq h_2$ iff for all $(v,v') \in \PD$,
all $D \in \Dist$, and all $s \in \fulls$, it holds that
$h_1(v,v')(D)(s) \leq h_2(v,v')(D)(s)$.
Also, the least element $0$ is given as 
$\lambda (v_1,v_2).\lambda D.\lambda s.0$.

The following result
is often convenient;
in particular, it shows that
if each distribution in a chain $\chain{D_k}{k}$ is bounded
then also the least upper bound is a bounded distribution.
\begin{lemma}
\label{lem:sumlim-limsum}
Assume that $\chain{D_k}{k}$ 
is a chain of distributions (not necessarily bounded).
With $S$ a (countable) set of stores, we have
\[
\limit{k}{\sum_{s \in S}{D_k(s)}} =
\sum_{s \in S}{(\limit{k}{D_k})(s)}.
\]
\end{lemma}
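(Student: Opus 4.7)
The plan is to prove the equality as two inequalities, with the whole argument amounting to the monotone convergence theorem specialized to the counting ``measure'' on the countable set $S$. No deep analysis is needed — only careful bookkeeping among three notions of supremum: least upper bounds of chains in $\Dist$, limits of non-decreasing sequences of non-negative reals, and countable sums viewed as suprema of finite partial sums.

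First I would unfold the definition of the limit in $\Dist$: since the order and limits on $\Dist$ are pointwise, $(\limit{k}{D_k})(s) = \limit{k}{D_k(s)}$ for every $s \in \fulls$, and this is also the supremum of the non-decreasing sequence $\{D_k(s)\}_k$ of non-negative reals. The equality to be shown thus reads $\limit{k}{\sum_{s \in S}D_k(s)} = \sum_{s \in S}\limit{k}{D_k(s)}$, which is the statement I want to establish.

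For the direction $\leq$, I would observe that for every fixed $k$ and every $s \in S$ we have $D_k(s) \leq \limit{k'}{D_{k'}(s)}$, so summing over $S$ gives $\sum_{s \in S} D_k(s) \leq \sum_{s \in S}\limit{k'}{D_{k'}(s)}$. The right-hand side does not depend on $k$, so taking $\limit{k}{\cdot}$ on the left preserves the inequality.

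For the direction $\geq$, I would use the standard fact that a sum $\sum_{s \in S} a(s)$ of non-negative terms equals the supremum of the finite partial sums $\sum_{s \in F} a(s)$ ranging over finite $F \subseteq S$. For such a finite $F$, continuity of finite addition (applied $|F|$ times to the monotone sequences $\{D_k(s)\}_k$) yields $\sum_{s \in F}\limit{k}{D_k(s)} = \limit{k}{\sum_{s \in F} D_k(s)}$; and since $D_k$ is non-negative, $\sum_{s \in F} D_k(s) \leq \sum_{s \in S} D_k(s)$, so $\limit{k}{\sum_{s \in F} D_k(s)} \leq \limit{k}{\sum_{s \in S} D_k(s)}$. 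Taking the supremum over all finite $F \subseteq S$ on the left side converts it to $\sum_{s \in S}\limit{k}{D_k(s)}$, yielding the desired inequality.

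The only real obstacle is the step ``countable sum $=$ sup of finite partial sums,'' which I would simply cite as the definition of the unordered sum of non-negative reals (equivalently, Tonelli applied to counting measure on $S$); without this the argument would need to rearrange terms by hand using an enumeration of $S$ and an $\varepsilon$-argument, which is routine but tedious.
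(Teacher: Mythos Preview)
Your proof is correct and follows essentially the same approach as the paper: the $\leq$ direction is identical, and for $\geq$ both arguments reduce to finite subsets of $S$ where sums and limits commute trivially. The only difference is packaging---you invoke the ``sup over finite partial sums'' characterization directly, whereas the paper unwinds this into an explicit $\epsilon$-argument by contradiction (choosing a finite $S_0$ with $\sum_{s\in S_0}D'(s)$ large enough and then picking $K$ so that each $D_K(s)$ is within $\epsilon/|S_0|$ of its limit); your version is slightly cleaner but the underlying idea is the same.
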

\begin{proof}
Let $D' = \limit{k}{D_k}$.
From $D_k \leq D'$ we get that
$\sum_{s \in S}{D'(s)}$ is an upper bound for 
$\chain{\sum_{s \in S}{D_k(s)}}{k}$; as
$\limit{k}{\sum_{s \in S}{D_k(s)}}$ is the least upper bound,
we get
\[
\limit{k}{\sum_{s \in S}{D_k(s)}} \leq \sum_{s \in S}{D'(s)}.
\]
To establish that equality holds, we shall assume
$\limit{k}{\sum_{s \in S}{D_k(s)}} < \sum_{s \in S}{D'(s)}$
so as to get a contradiction.
Then there exists $\epsilon > 0$ such that
$\limit{k}{\sum_{s \in S}{D_k(s)}} + \epsilon < \sum_{s \in S}{D'(s)}$.
We infer that there exists a finite set $S_0$ with $S_0 \subseteq S$
such that 
$\limit{k}{\sum_{s \in S}{D_k(s)}} + \epsilon < \sum_{s \in S_0} D'(s)$.
For each $s \in S_0$ there exists $K_s$ such that
$D_k(s) > D'(s) - \epsilon/|S_0|$ for $k \geq K_s$,
and thus there exists $K$ (the maximum element of the finite set $\{K_s \mid s \in S_0\}$)
such that for each $s \in S_0$, and each $k \geq K$,
$D_k(s) + \epsilon/|S_0| > D'(s)$.
But then we get
\begin{eqnarray*}
\sum_{s \in S_0} D'(s) & < & \sum_{s \in S_0}(D_K(s) + \epsilon/|S_0|)
= \sum_{s \in S_0}D_K(s) + \epsilon
\leq \limit{k}{\sum_{s \in S}{D_k(s)}} + \epsilon
\\ & < & \sum_{s \in S_0} D'(s).
\end{eqnarray*}
which yields the desired contradiction.
\end{proof}
When developing the semantics,
we shall define a number of functions with functionality
$\Dist \rightarrow \Dist$. Each such function $f$ typically 
has a number of useful properties, such as being
\begin{itemize}
\item
\emph{continuous} (and hence monotone), as defined above;
\item \emph{additive}:
$f(D_1 + D_2) = f(D_1) + f(D_2)$ for all distribution $D_1$, $D_2$
(this reflects that a distribution is not more than the sum of its components);
\item
\emph{multiplicative}: $f(cD) = cf(D)$ for all distributions
$D$ and all real $c \geq 0$;
\item
\emph{non-increasing}:
$\sumd{f(D)} \leq \sumd{D}$ for all
distributions $D$ (this reflects that distribution may disappear,
due to to $\observevv$ nodes or infinite loops, but
cannot be created \emph{ex nihilo}).
\end{itemize}
In addition, some functions will even be
\begin{itemize}
\item
\emph{sum-preserving}: 
$\sumd{f(D)} = \sumd{D}$ for all
distributions $D$;
\item
\emph{deterministic}:
$f(D)$ is concentrated for all concentrated distributions $D$.
\end{itemize}
The semantic function for an 
$\observevv$ node is not sum-preserving (unless the condition is always true), and neither is the
semantic function for a loop that has a non-zero probability of
non-termination;
the semantic function for a random assignment 
is not deterministic 
(unless the random distribution is concentrated on one value).

\subsubsection{One-Step Semantics}
\label{subsec:sem-one}
We now define the semantics of traveling one edge
in the pCFG. 

\paragraph{$\observevv$ and branching nodes}
For such nodes, distributions are pruned; to model that,
for a boolean expression $B$ we define $\selectf{B}$ by letting
$\selectf{B}(D) = D'$ where
\[
\begin{array}{rcll}
D'(s) & = & D(s) & \mbox{if } \seme{B}s \\[1mm]
D'(s) & = & 0 & \mbox{otherwise.}
\end{array}
\]
It is then straight-forward to establish that
\begin{lemma}
\label{lem:selectB}
For all $B$,
$\selectf{B}$ is continuous, additive, multiplicative,
non-increasing, and deterministic; also, for all $D$ we have
\[
\selectf{B}(D) + \selectf{\neg B}(D) = D.
\]
\end{lemma}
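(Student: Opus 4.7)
The plan is to observe first that $\selectf{B}$ acts pointwise on distributions: for each store $s$, we have $\selectf{B}(D)(s) = \chi_B(s) \cdot D(s)$ where $\chi_B(s)$ equals $1$ if $\seme{B}s$ holds and $0$ otherwise. Once this is noted, every claimed property reduces to a trivial identity about the reals (either $D(s) = D(s)$ or $0 = 0$, depending on whether $B$ holds at $s$), checked store by store. I would therefore structure the proof as five short items, one per property, each a one-line pointwise verification.

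For \emph{additivity} and \emph{multiplicativity}, at each store $s$ both sides equal $\chi_B(s)\cdot(D_1(s) + D_2(s))$ and $\chi_B(s)\cdot c D(s)$ respectively, splitting cleanly by the two cases for $\seme{B}s$. For \emph{non-increasingness}, note $\selectf{B}(D)(s) \le D(s)$ at every $s$, hence summing over $\fulls$ preserves the inequality. For \emph{determinism}, if $D$ is concentrated on $s_0$ then $\selectf{B}(D)$ is either $D$ itself (when $\seme{B}{s_0}$ holds) or the zero distribution (otherwise); in either case the result is concentrated, since per the paper's convention $0$ is concentrated on every store.

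For \emph{continuity}, given a chain $\chain{D_k}{k}$ in $\Dist$ with pointwise limit $D' = \limit{k}{D_k}$, I would fix an arbitrary store $s$ and compute
\[
\selectf{B}(D')(s) = \chi_B(s)\cdot D'(s) = \chi_B(s)\cdot \limit{k}{D_k(s)} = \limit{k}{\chi_B(s)\cdot D_k(s)} = \limit{k}{\selectf{B}(D_k)(s)},
\]
using that multiplication by the constant $\chi_B(s) \in \{0,1\}$ commutes with suprema of chains of nonnegative reals; since the limit in $\Dist$ is pointwise (as noted in Section~\ref{subsec:sem-domain}), this gives $\selectf{B}(D') = \limit{k}{\selectf{B}(D_k)}$. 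Finally, for the decomposition identity, at each $s$ exactly one of $\chi_B(s)$ and $\chi_{\neg B}(s)$ equals $1$ and the other $0$, so $\selectf{B}(D)(s) + \selectf{\neg B}(D)(s) = D(s)$.

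Since $\selectf{B}$ is essentially multiplication by a $\{0,1\}$-valued function, there is no real obstacle here; the only care needed is to invoke the pointwise definition of the cpo structure on $\Dist$ when handling continuity, so that limits can be pushed inside the expression without any appeal to measure theory.
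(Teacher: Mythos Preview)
Your proposal is correct and is precisely the kind of routine pointwise verification the paper has in mind; the paper itself offers no proof beyond remarking that the lemma is ``straight-forward to establish.'' Your write-up simply fills in those details, so the approaches coincide.
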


\paragraph{Assignments}
For a variable $x$ and an expression $E$,
we define $\assignf{x}{E}$ by letting
$\assignf{x}{E}(D)$ 
be a distribution $D'$ such that
for each $s' \in \fulls$, 
\[
D'(s') =
\sum_{s \in \fulls\ \mid\ s' = \upd{s}{x}{\seme{E}s}} D(s).
\]
That is, the ``new'' probability of a store $s'$ is the sum of the 
``old'' probabilities
of the stores that become like $s'$ after the assignment
(this will happen for a store $s$ iff $s' = \upd{s}{x}{\seme{E}s}$).
\begin{lemma}
\label{lem:assign-misc}
\label{lem:assign-cont}
Each $\assignf{x}{E}$ is continuous, additive, multiplicative, non-increasing, sum-preserving, and deterministic.
\end{lemma}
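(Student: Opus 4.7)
The plan is to handle each of the six properties in turn, exploiting the central observation that the map $s \mapsto \upd{s}{x}{\seme{E}s}$ is a function on $\fulls$: every source store contributes to exactly one target store in the defining sum for $\assignf{x}{E}(D)$.

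First I would dispose of \emph{additivity} and \emph{multiplicativity} by unfolding the definition. For fixed $s'$, the set $T_{s'} = \{\,s \mid s' = \upd{s}{x}{\seme{E}s}\,\}$ does not depend on $D$, so $\assignf{x}{E}(D_1+D_2)(s') = \sum_{s \in T_{s'}} (D_1(s)+D_2(s))$ splits as $\assignf{x}{E}(D_1)(s')+\assignf{x}{E}(D_2)(s')$ by rearranging a sum of non-negative reals, and the scalar case $cD$ is analogous. \emph{Determinism} is equally direct: if $D$ is concentrated on $s_0$, the only non-vanishing term in any $T_{s'}$-sum is the one for $s = s_0$, so $\assignf{x}{E}(D)$ is concentrated on $\upd{s_0}{x}{\seme{E}s_0}$ (and if $D = 0$, then $\assignf{x}{E}(D) = 0$, which is concentrated on every store).

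Next I would prove \emph{sum-preservation} (which gives \emph{non-increasing} for free). Since the sets $T_{s'}$ as $s'$ ranges over $\fulls$ partition $\fulls$ — every $s$ lies in the unique $T_{s'}$ with $s' = \upd{s}{x}{\seme{E}s}$ — we can exchange the order of summation to get
\[
\sumd{\assignf{x}{E}(D)} \;=\; \sum_{s' \in \fulls}\sum_{s \in T_{s'}} D(s) \;=\; \sum_{s \in \fulls} D(s) \;=\; \sumd{D}.
\]
The reordering is valid because all summands are non-negative.

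The main step, and the only one that is not purely algebraic, is \emph{continuity}. Given a chain $\chain{D_k}{k}$ with limit $D$, I would fix $s'$ and compute
\[
\assignf{x}{E}(D)(s') \;=\; \sum_{s \in T_{s'}} D(s) \;=\; \sum_{s \in T_{s'}} \limit{k}{D_k(s)} \;=\; \limit{k}{\sum_{s \in T_{s'}} D_k(s)} \;=\; \limit{k}{\assignf{x}{E}(D_k)(s')},
\]
where the middle equality is precisely Lemma~\ref{lem:sumlim-limsum} applied to the chain $\chain{D_k}{k}$ with the countable set $S = T_{s'}$ (note that the limit of a chain in $\Dist$ is defined pointwise, so $D(s) = \limit{k}{D_k(s)}$). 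Since this holds for every $s'$, and since limits in $\Dist \contarrow \Dist$ are defined pointwise, this establishes that $\assignf{x}{E}$ is continuous. The potential obstacle — that the defining sum for $\assignf{x}{E}(D)(s')$ may be infinite, so interchanging it with the limit is not automatic — is exactly what Lemma~\ref{lem:sumlim-limsum} was designed to handle, so the proof proceeds without further difficulty.
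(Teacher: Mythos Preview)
Your proof is correct and follows essentially the same approach as the paper: additivity, multiplicativity, and determinism are disposed of directly; sum-preservation is obtained by reordering the double sum (you phrase this as the $T_{s'}$ forming a partition of $\fulls$, which is exactly the content of the paper's rearrangement); and continuity is reduced to Lemma~\ref{lem:sumlim-limsum}. The only cosmetic omission is that you do not explicitly note that $\chain{\assignf{x}{E}(D_k)}{k}$ is itself a chain (needed for the limit to make sense in the cpo), but this is immediate from monotonicity, which the paper also dismisses as ``obvious''.
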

\begin{prooff}
Additivity and multiplicativity are trivial,
and if $D$ is concentrated on $s_0$
then $\assignf{x}{E}(D)$ is concentrated on
$\upd{s_0}{x}{\seme{E}s_0}$. We are left with two non-trivial tasks.

To prove that $\assignf{x}{E}$ is {\bf sum-preserving}, 
and hence non-increasing,
we have the calculation
\begin{eqnarray*}
\sumd{D'} & = & 
\sum_{s' \in \fulls}D'(s') 
=
\sum_{s' \in \fulls} \left( \sum_{s \in \fulls\ \mid\ s' = \upd{s}{x}{\seme{E}s}} D(s) \right)
=
\sum_{s \in \fulls,\ s' \in \fulls\ \mid\ s' = \upd{s}{x}{\seme{E}s}} D(s)
=
\sum_{s \in \fulls}D(s) \\ & = & \sumd{D}.
\end{eqnarray*}
To prove that $\assignf{x}{E}$ is {\bf continuous},
let $\chain{D_k}{k}$ be a chain; with
$D'_k = \assignf{x}{E}(D_k)$,
also $\chain{D'_k}{k}$ is a chain
since $\assignf{x}{E}$ is obviously monotone.
With $D = \limit{k}{D_k}$ and $D' = \limit{k}{D'_k}$,
our goal is to prove that $D' = \assignf{x}{E}(D)$.
But this follows since by Lemma~\ref{lem:sumlim-limsum}
for each $s'$ we have the calculation
\begin{eqnarray*}
D'(s') & = & \limit{k}{D'_k(s')} =
     \limit{k}{\sum_{s \in \fulls\ \mid\ s' = \upd{s}{x}{\seme{E}s}} D_k(s)} 
\\ & = & \sum_{s \in \fulls\ \mid\ s' = \upd{s}{x}{\seme{E}s}}\limit{k}{D_k(s)}
    = \sum_{s \in \fulls\ \mid\ s' = \upd{s}{x}{\seme{E}s}} D(s).
\end{eqnarray*}
\end{prooff}

\paragraph{Random Assignments}
For a variable $x$ and a random distribution $\psi$,
we define $\rassignf{x}{\psi}$ by letting
$\rassignf{x}{\psi}(D)$
be a distribution $D'$ such that
for each $s' \in \fulls$,
\[
D'(s') = \psi(s'(x)) \left(
\sum_{s \in \fulls\ \mid\ s' = \upd{s}{x}{s'(x)}} D(s) \right).
\]
\begin{lemma}
\label{lem:rassign-misc}
\label{lem:rassign-cont}
Each $\rassignf{x}{E}$ is continuous, additive, multiplicative, non-increasing, and sum-preserving.
\end{lemma}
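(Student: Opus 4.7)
The plan is to follow the template of Lemma~\ref{lem:assign-misc}, noting that the only structural difference in the definition of $\rassignf{x}{\psi}(D)(s')$ compared to $\assignf{x}{E}(D)(s')$ is the extra multiplicative factor $\psi(s'(x))$, which is independent of $D$. This immediately gives additivity and multiplicativity, since $D$ enters the formula linearly, and monotonicity, since $\psi$ is non-negative. Note that ``deterministic'' is absent from the conclusion: a concentrated input distribution is spread over the support of $\psi$, so determinism genuinely fails unless $\psi$ itself is concentrated.

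For \textbf{sum-preservation}, I would compute $\sumd{\rassignf{x}{\psi}(D)}$ by swapping the order of the two summations. After the swap, for each fixed $s$ the surviving $s'$ are exactly those of the form $\upd{s}{x}{z}$ for $z \in \Zz$, with each such $s'$ contributing the factor $\psi(z)$. This gives
\[
\sumd{\rassignf{x}{\psi}(D)} \;=\; \sum_{s \in \fulls} D(s) \sum_{z \in \Zz} \psi(z) \;=\; \sumd{D},
\]
where the last step uses that $\psi$ is a probability distribution and hence $\sum_z \psi(z) = 1$. Non-increasingness is then an immediate consequence.

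For \textbf{continuity}, I would reuse the argument from Lemma~\ref{lem:assign-cont}. Given a chain $\chain{D_k}{k}$, monotonicity makes $\chain{\rassignf{x}{\psi}(D_k)}{k}$ a chain as well; writing $D = \limit{k}{D_k}$ and $D' = \limit{k}{\rassignf{x}{\psi}(D_k)}$, the task reduces to showing $D'(s') = \rassignf{x}{\psi}(D)(s')$ pointwise. For each $s'$, I would pull the scalar $\psi(s'(x))$ through the limit (using that scalar multiplication by a non-negative real is continuous on $[0,\infty)$) and then invoke Lemma~\ref{lem:sumlim-limsum} to exchange the limit with the sum over $s$, concluding that $D'(s') = \psi(s'(x))\sum_{s \mid s' = \upd{s}{x}{s'(x)}} D(s)$.

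The only ingredient that goes beyond the corresponding argument for $\assignf{x}{E}$ is the identity $\sum_z \psi(z) = 1$ used in sum-preservation; aside from treating $\psi(s'(x))$ as a constant scalar during limit interchanges, no step presents a genuine obstacle.
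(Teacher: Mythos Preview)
Your proposal is correct and mirrors the paper's own proof essentially step for step: additivity and multiplicativity are declared trivial, sum-preservation is obtained by swapping the order of summation and using $\sum_z \psi(z)=1$, and continuity is shown by pulling the scalar $\psi(s'(x))$ out and invoking Lemma~\ref{lem:sumlim-limsum} to interchange limit and sum. There is no meaningful difference in approach.
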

\begin{prooff}
Additivity and multiplicativity are trivial.
To prove that $\rassignf{x}{E}$ is {\bf sum-preserving}, 
and hence non-increasing,
we have the calculation (using that 
$\sum_{z \in \Zz}\psi(z) = 1$)
\begin{eqnarray*}
\sumd{D'} & = & \displaystyle
\sum_{s' \in \fulls}D'(s') 
=
\sum_{s' \in \fulls} \psi(s'(x))
\left( \sum_{s \in \fulls\ \mid\ s' = \upd{s}{x}{s'(x)}} D(s) \right)
=
\sum_{s \in \fulls,\ s' \in \fulls\ \mid\ s' = \upd{s}{x}{s'(x)}}
\psi(s'(x))D(s)
\\[1mm] & = & 
\sum_{s \in \fulls,\ z \in \Zz,\ s' \in \fulls\ \mid\ s'(x) = z,\ s' = \upd{s}{x}{s'(x)}}
\psi(s'(x))D(s)
=
\sum_{s \in \fulls,\ z \in \Zz,\ s' \in \fulls\ \mid\ s' = \upd{s}{x}{z}}
\psi(z)D(s)
\\[1mm] & = & \displaystyle
\sum_{s \in \fulls, z \in \Zz}
\psi(z)D(s)
=
\sum_{s \in \fulls}D(s) \sum_{z \in \Zz}\psi(z)
=
\sum_{s \in \fulls}D(s) \cdot 1
=
\sumd{D}.
\end{eqnarray*}
To prove that $\rassignf{x}{E}$ is {\bf continuous},
let $\chain{D_k}{k}$ be a chain; with
$D'_k = \rassignf{x}{E}(D_k)$,
also $\chain{D'_k}{k}$ is a chain
since $\rassignf{x}{E}$ is obviously monotone.
With $D = \limit{k}{D_k}$ and $D' = \limit{k}{D'_k}$,
our goal is to prove that $D' = \rassignf{x}{E}(D)$.
But this follows since by Lemma~\ref{lem:sumlim-limsum}
for each $s'$ we have the calculation
\begin{eqnarray*}
D'(s') & = & \limit{k}{D'_k(s')} =
     \limit{k}{\left(\psi(s'(x)) 
    \sum_{s \in \fulls\ \mid\ s' = \upd{s}{x}{s'(x)}} D_k(s) \right)}
\\ & = & 
\psi(s'(x)) \left(\limit{k}{\sum_{s \in \fulls\ \mid\ s' = \upd{s}{x}{s'(x)}} D_k(s)}\right)
  =
   \psi(s'(x)) \left( \sum_{s \in \fulls\ \mid\ s' = \upd{s}{x}{s'(x)}} \limit{k}{D_k(s)} \right)
  \\ & = &
  \psi(s'(x)) \left( \sum_{s \in \fulls\ \mid\ s' = \upd{s}{x}{s'(x)}} D(s) \right).
\end{eqnarray*}
\end{prooff}

\subsubsection{Semantics as a Fixed-Point}
\label{subsec:toplevel}
Having expressed the semantics of a single edge,
we shall now express the semantics of a full pCFG.
Our goal is to compute ``modification functions''
to express how a distribution is modified
as ``control'' moves from $\startv$ to $\finalv$.
To accomplish this, we shall solve a more general problem:
for each $(v,v') \in \PD$,
state how a given distribution
is modified as ``control'' moves from $v$ to $v'$
along paths that may contain
multiple branches and even loops
but which do \emph{not} contain $v'$ until the end.

We would have liked to have a definition of the modification
function that is inductive in $\LAP{v}{v'}$,
but this is not possible due to cycle-inducing nodes
(cf.~Definition~\ref{def:cycle-induce}).
For such nodes, the semantics cannot be expressed by recursive calls
on the successors, but the semantics of (at least) one of
the successors will have to be provided as an {\em argument}.
This motivates that our main semantic function be a \emph{functional}
$\HHH$ 
that transforms a modification function into another modification function,
with the desired meaning being the \emph{fixed point}
(cf.~Lemma~\ref{lem:cont-cpo-fix}) of this functional.
The functional $\HHH$ 
operates on 
$\PD \contarrow \Dist \contarrow \Dist$ but we shall first define it for
$\PD \rightarrow \Dist \rightarrow \Dist$:
\begin{definition}[$\HHH$]
\label{def:HHX}
The functionality of $\HHH$ is given by 
\[
\HHH: (\PD \rightarrow \Dist \rightarrow \Dist)
 \rightarrow
   (\PD \rightarrow \Dist \rightarrow \Dist)
\]
where, given 
\[
h_0: \PD \rightarrow \Dist \rightarrow \Dist
\]
we define 
\[
h = \HHH(h_0): \PD \rightarrow \Dist \rightarrow \Dist
\]
by letting $h (v,v')$, written $\hh{v}{v'}$, be stipulated by the
following rules that are inductive in $\LAP{v}{v'}$:
\begin{enumerate}
\item
if $v' = v$ then $\hh{v}{v'}(D) = D$;
\item
\label{evalk-transitive}
otherwise, if $v' \neq v''$ with $v'' = \fppd{v}$
then 
\[
\hh{v}{v'}(D) = \hh{v''}{v'}(\hh{v}{v''}(D))
\]
(this is well-defined by Lemma~\ref{lem:LAP-add});
\item
otherwise, that is if $v' = \fppd{v}$:
\begin{enumerate}
\item
if $\labv{v} = \skipv$
then $\hh{v}{v'}(D) = D$;
\item
if $\labv{v}$ is of the form $\assignv{x}{E}$
then $\hh{v}{v'}(D) = \assignf{x}{E}(D)$;
\item
if $\labv{v}$ is of the form $\rassignv{x}{\psi}$
then $\hh{v}{v'}(D) = \rassignf{x}{\psi}(D)$;
\item
if $\labv{v}$ is of the form $\observev{B}$
then $\hh{v}{v'}(D) = \selectf{B}(D)$;
\item
\label{def:evalk:cond}
otherwise, that is if $v$ is a branching node with condition $B$,
we compute $\hh{v}{v'}$ as follows:
with $v_1$ the $\true$-successor of $v$ and $v_2$ the $\false$-successor
of $v$, let $D_1 = \selectf{B}(D)$ and $D_2 = \selectf{\neg B}(D)$;
then let $\hh{v}{v'}(D) = D'_1 + D'_2$ where for each
$i \in \{1,2\}$, $D'_i$ is given as
\begin{itemize}
\item
if $\LAP{v_i}{v'} < \LAP{v}{v'}$ then
$D'_i =\hh{v_i}{v'}(D_i)$;
\item
if $\LAP{v_i}{v'} \geq \LAP{v}{v'}$ (and thus $v$ is cycle-inducing)
then 
$D'_i = \hpd{h_0}{v_i}{v'}(D_i)$.
\end{itemize}
\end{enumerate}
\end{enumerate}
\end{definition}
\begin{lemma}
\label{lem:hcontH}
$\HHH$ maps from $\PD \contarrow (\Dist \contarrow \Dist)$ to itself
(with $\PD$ considered a discrete cpo).
\end{lemma}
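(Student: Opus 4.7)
The plan is to exploit that $\PD$ is a discrete cpo, where every chain is eventually constant; consequently every function $\PD \to Y$ is automatically continuous in its $\PD$-argument. So it suffices to show that for each fixed $(v,v') \in \PD$ the map $\hh{v}{v'} : \Dist \to \Dist$ is continuous. This I would prove by induction on $\LAP{v}{v'}$, following the case analysis in Definition~\ref{def:HHX}, and using the already-established continuity lemmas for the one-step operators.

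For the base case $v = v'$, the map $\hh{v}{v'}$ is the identity, which is trivially continuous. In Case~\ref{evalk-transitive}, where $v' \neq v'' = \fppd{v}$, we have $\hh{v}{v'}(D) = \hh{v''}{v'}(\hh{v}{v''}(D))$; Lemma~\ref{lem:LAP-add} gives $\LAP{v}{v''} + \LAP{v''}{v'} = \LAP{v}{v'}$, and both summands are strictly positive (since $v'' \neq v$, as $v''$ is a proper postdominator, and $v'' \neq v'$ by assumption), so the induction hypothesis applies to both factors and continuity is preserved under composition. In Case~3 with $v' = \fppd{v}$, subcases (a)--(d) reduce directly to the continuity of the identity, $\assignf{x}{E}$, $\rassignf{x}{\psi}$, and $\selectf{B}$, established in Lemmas~\ref{lem:selectB}, \ref{lem:assign-cont}, and \ref{lem:rassign-cont}.

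The only nontrivial subcase is the branching node~\ref{def:evalk:cond}, where $\hh{v}{v'}(D) = D'_1 + D'_2$. Each $D'_i$ is obtained by first applying $\selectf{B}$ or $\selectf{\neg B}$ (continuous by Lemma~\ref{lem:selectB}) and then either $\hh{v_i}{v'}$ — continuous by the induction hypothesis, since by construction this branch is taken exactly when $\LAP{v_i}{v'} < \LAP{v}{v'}$ — or $\hpd{h_0}{v_i}{v'}$, continuous by the hypothesis that $h_0 \in \PD \contarrow \Dist \contarrow \Dist$. To close the argument I would note that pointwise addition of two continuous maps into $\Dist$ is continuous: for a chain $\chain{D_k}{k}$ with pointwise limit $D$, and for each store $s$, the sequences $f_1(D_k)(s)$ and $f_2(D_k)(s)$ are monotone chains of non-negative reals, and $\limit{k}(f_1(D_k)(s) + f_2(D_k)(s)) = \limit{k} f_1(D_k)(s) + \limit{k} f_2(D_k)(s)$, which equals $f_1(D)(s) + f_2(D)(s)$ by continuity of $f_1$ and $f_2$. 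The main obstacle is essentially bookkeeping: keeping straight that the induction parameter $\LAP{v}{v'}$ strictly decreases exactly in those recursive calls that the definition routes back into $h$, while cycle-inducing edges — where $\LAP{v_i}{v'} \geq \LAP{v}{v'}$ — are precisely those routed to the argument $h_0$, whose continuity is given by assumption.
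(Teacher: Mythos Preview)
Your proposal is correct and follows essentially the same approach as the paper: an induction on $\LAP{v}{v'}$ showing that each $\hh{v}{v'}$ is continuous (using Lemmas~\ref{lem:selectB}, \ref{lem:assign-cont}, \ref{lem:rassign-cont} and closure of continuity under composition), together with the observation that functions out of the discrete cpo $\PD$ are automatically continuous. You spell out more detail than the paper does---in particular the branching case and the continuity of pointwise addition, which the paper leaves implicit---but the structure and ingredients are the same.
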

\begin{proof}
We first show that if
$\hpd{h_0}{v}{v'}$ is continuous for all $(v,v') \in \PD$
then with $h = \HHH(h_0)$, then 
$\hh{v}{v'}$ is continuous for all $(v,v') \in \PD$.
But this follows by an easy 
induction in $\LAP{v}{v'}$,
using Lemmas~\ref{lem:selectB}, \ref{lem:assign-cont} and
\ref{lem:rassign-cont}, and the fact that the composition of two continuous
functions is continuous.

Thus $\HHH$ is a mapping from
$\PD \rightarrow \Dist \contarrow \Dist$ to itself.
The claim now follows since
all functions from the discrete cpo $\PD$ are continuous,
as a chain in $\PD$ can contain only one element.
\end{proof}
\begin{lemma}
\label{lem:HHcont}
The functional $\HHH$ is continuous
on $\PD \contarrow (\Dist \contarrow \Dist)$.
\end{lemma}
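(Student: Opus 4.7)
The plan is to prove this by induction on $\LAP{v}{v'}$, showing for each $(v,v') \in \PD$ that $h_0 \mapsto \HHH(h_0)(v,v')$ is a continuous map from $\PD \contarrow (\Dist \contarrow \Dist)$ into $\Dist \contarrow \Dist$. Because limits and ordering in both spaces are defined pointwise (Lemma~\ref{lem:cont-cpo}), the statement reduces to: for any chain $\chain{h_0^{(n)}}{n}$ with limit $h_0^\infty$, any $D \in \Dist$ and any $s \in \fulls$,
\[
\HHH(h_0^\infty)(v,v')(D)(s) = \limit{n}{\HHH(h_0^{(n)})(v,v')(D)(s)}.
\]
Monotonicity of $\HHH$ in $h_0$, which makes the right-hand side a well-defined chain, drops out of the same case analysis and I would carry it along in the induction.

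The case split would follow Definition~\ref{def:HHX}. In case 1 ($v = v'$) and cases 3a--3d (node labeled $\skipv$, an assignment, a random assignment, or an $\observevv$), the value $\HHH(h_0)(v,v')(D)$ does not mention $h_0$ at all, so continuity in $h_0$ is trivial. In case 2, $\HHH(h_0)(v,v') = \HHH(h_0)(v'',v') \circ \HHH(h_0)(v,v'')$ with $v'' = \fppd{v}$; by Lemma~\ref{lem:LAP-add} both factors live at strictly smaller $\LAPf$ values, so the induction hypothesis gives continuity in $h_0$ of each factor separately. In case 3e the output is $D_1' + D_2'$, where each $D_i'$ is either $\HHH(h_0)(v_i,v')(\selectf{B_i}(D))$ with $\LAP{v_i}{v'} < \LAP{v}{v'}$ (continuous in $h_0$ by induction), or $h_0(v_i,v')(\selectf{B_i}(D))$, in which case continuity in $h_0$ is immediate from the pointwise definition of limits on $\PD \contarrow (\Dist \contarrow \Dist)$; addition is pointwise on $\Dist$, so it preserves limits of chains in each argument.

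The one step that requires genuine attention is combining the pieces in case 2: continuity must hold in the two factors \emph{jointly}. Writing $f_n = \HHH(h_0^{(n)})(v'',v')$, $g_n = \HHH(h_0^{(n)})(v,v'')$, and $f, g$ for their limits, the goal $\limit{n}{f_n(g_n(D))} = f(g(D))$ is the standard diagonal argument on pointed cpos. The $\preceq$ direction uses $f_n \preceq f$, $g_n \preceq g$, and monotonicity of $f$. For $\succeq$, I would appeal to continuity of $f$ in $D$ (supplied by Lemma~\ref{lem:hcontH}, which has already established continuity in $D$ for any $h_0$ in $\PD \contarrow (\Dist \contarrow \Dist)$) to rewrite $f(g(D)) = \limit{m}{f(g_m(D))} = \limit{m}{\limit{n}{f_n(g_m(D))}}$, and then observe that $f_n(g_m(D)) \preceq f_k(g_k(D))$ with $k = \max(n,m)$ by monotonicity in each argument, so the double limit is bounded above by $\limit{k}{f_k(g_k(D))}$. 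I expect this diagonal step in case 2 to be the only place the argument is not pure bookkeeping over the definition.
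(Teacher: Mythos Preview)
Your proposal is correct and follows essentially the same route as the paper: induction on $\LAP{v}{v'}$ with a case analysis on Definition~\ref{def:HHX}, the non-trivial work being case~2 where the paper also invokes Lemma~\ref{lem:hcontH} (continuity in $D$) together with the diagonal argument you spell out. Your treatment of that step is in fact more careful than the paper's, which compresses the diagonal into a single line; otherwise the two proofs coincide.
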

\begin{prooff}
Consider a chain $\chain{g_k}{k}$,
so as to prove that
$\HH{X}(\limit{k}{g_k}) = \limit{k}{\HH{X}(g_k)}$. For all $(v,v') \in \PD$
and all $D$ in $\Dist$, we must thus prove
\[
\HH{X}(\limit{k}{g_k})(v,v')(D) = \limit{k}{\HH{X}(g_k)(v,v')(D)}
\]
and shall do so by induction in $\LAP{v}{v'}$, with a case analysis 
in Definition~\ref{def:HHX}. We consider some typical cases:
\begin{itemize}
\item
If $\labv{v}$ is of the form $\assignv{x}{E}$
then both sides evaluate to
$\assignf{x}{E}(D)$.
\item
If $v' \neq v''$ where $v'' = \fppd{v}$
then we have the calculation 
\begin{eqnarray*}
\HHH(\limit{k}{g_k})(v,v')(D) 
& = &
\HHH(\limit{k}{g_k})(v'',v')(\HHH(\limit{k}{g_k})(v,v'')(D))
 \\ & = &
(\limit{k}{\HHH(g_k)(v'',v')})(\limit{k}{\HHH(g_k)(v,v'')(D)})
 \\ & = &
\limit{k}{\left(\HHH(g_k)(v'',v')(\HHH(g_k)(v,v'')(D))\right)}
 \\ & = &
\limit{k}{\HHH(g_k)(v,v')(D)}
\end{eqnarray*}
where the second equality follows from the induction hypothesis,
and the third equality from continuity of $\HHH(g_k)(v'',v')$
(Lemma~\ref{lem:hcontH}).
\item
If $v$ is a branching node with condition $B$,
$\true$-successor $v_1$, and $\false$-successor $v_2$,
where $\LAP{v_1}{v'} \geq \LAP{v}{v'}$ and
$\LAP{v_2}{v'} < \LAP{v}{v'}$ (other cases are similar),
with $D_1 = \selectf{B}(D)$ and $D_2 = \selectf{\neg B}(D)$
we have the calculation (where the second equality follows
from the induction hypothesis):
\begin{eqnarray*}
\HHH(\limit{k}{g_k})(v,v')(D) 
 & = &
\limit{k}{g_k}(v_1,v')(D_1) + \HHH(\limit{k}{g_k})(v_2,v')(D_2)
 \\ & = &
\limit{k}{(g_k(v_1,v')(D_1) + \HHH(g_k)(v_2,v')(D_2))}
 \\ & = &
\limit{k}{\HHH(g_k)(v,v')(D)}.
\end{eqnarray*}
\end{itemize}
\end{prooff}
\begin{proposition}
The functional $\HHH$ has a least fixed point
(belonging to $\PD \contarrow (\Dist \contarrow \Dist)$),
called $\fixed{\HHH}$, and given as
$\limit{k}{\HHH^k(0)}$, that is
the limit of the chain $\chain{\HHH^k(0)}{k}$
(where $\HHH^k(0)$ denotes $k$ applications of
$\HHH$ to the modification function that maps all distributions to 0). 
\end{proposition}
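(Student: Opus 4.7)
The plan is to apply Kleene's fixed-point theorem, stated as Lemma~\ref{lem:cont-cpo-fix}, to the functional $\HHH$ on the space $\PD \contarrow (\Dist \contarrow \Dist)$. This requires only that the underlying domain be a pointed cpo and that $\HHH$ be continuous on it; both facts are already established in the preceding material, so the proof amounts to stitching them together and identifying $\bot$ with $0$.

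First I would invoke Lemma~\ref{lem:space-cpo}, which asserts that $\PD \contarrow (\Dist \contarrow \Dist)$ is a pointed cpo. As noted immediately after that lemma, its bottom element is $\lambda(v,v').\lambda D.\lambda s.0$, which is exactly the modification function called $0$ in the statement we are proving. Next I would observe that $\HHH$ does map this cpo into itself: this is the content of Lemma~\ref{lem:hcontH}, which shows that if $h_0$ is continuous in its distribution argument for every pair in $\PD$, then so is $\HHH(h_0)$, and that continuity in the discrete $\PD$ argument is automatic.

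Continuity of $\HHH$ itself on $\PD \contarrow (\Dist \contarrow \Dist)$ is provided by Lemma~\ref{lem:HHcont}. With these three ingredients in hand, Lemma~\ref{lem:cont-cpo-fix} applies directly: the chain $\chain{\HHH^k(\bot)}{k}$ exists, its limit is the least fixed point of $\HHH$, and it lies in $\PD \contarrow (\Dist \contarrow \Dist)$. Since $\bot = 0$ in this cpo, we conclude that $\fixed{\HHH} = \limit{k}{\HHH^k(0)}$, exactly as claimed.

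There is essentially no obstacle here, since the real technical work, namely verifying continuity of $\HHH$ (which required the induction in $\LAP{v}{v'}$ and the case analysis over labels and branching/cycle-inducing nodes in Lemma~\ref{lem:HHcont}), has already been carried out. The only thing to be careful about is to make explicit that the $0$ appearing in $\HHH^k(0)$ denotes the bottom element of $\PD \contarrow (\Dist \contarrow \Dist)$ and not some other zero, so that the invocation of Lemma~\ref{lem:cont-cpo-fix} is unambiguous.
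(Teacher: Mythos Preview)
Your proposal is correct and follows essentially the same approach as the paper, which simply cites Lemma~\ref{lem:cont-cpo-fix} together with Lemmas~\ref{lem:HHcont} and~\ref{lem:space-cpo}. You give a bit more detail (in particular making explicit the identification of $\bot$ with $0$ and the role of Lemma~\ref{lem:hcontH}), but the argument is the same.
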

\begin{proof}
This follows from Lemma~\ref{lem:cont-cpo-fix},
using Lemmas~\ref{lem:HHcont} and \ref{lem:space-cpo}.
\end{proof}
We can now define the meaning of a pCFG:
\begin{definition}[Meaning of Probabilistic Control Flow Graph]
\label{def:omega}
Given a pCFG, 
we define its meaning $\omega$ as $\omega = \fixed{\HHH}$.
That is, $\omega = \limit{k}{\omega_k}$ 
where $\omega_k = \HHH^k(0)$ (thus $\omega_0 = 0$).
\end{definition}
Thus for all $k > 0$ we have $\omega_k = \HHH(\omega_{k-1})$.
Intuitively speaking, $\omega_k$ is the meaning of the pCFG assuming
that control is allowed to loop, that is move ``backwards'',
at most $k-1$ times.
\begin{lemma}
\label{lem:k-fixed-mult-nonincr-determ} 
For each $(v,v') \in \PD$, and each $k \geq 0$,
it holds that $\hpd{\omega_k}{v}{v'}$ is
additive, multiplicative and non-increasing;
it is even deterministic if the pCFG is deterministic.
\end{lemma}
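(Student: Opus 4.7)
The proof proceeds by a double induction: an outer induction on $k$, and for the inductive step an inner induction on $\LAP{v}{v'}$ mirroring the case analysis in Definition~\ref{def:HHX}.

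For the base case $k=0$, we have $\omega_0 = 0$, so $\hpd{\omega_0}{v}{v'}(D) = 0$ for all $D$. The zero distribution is trivially additive and multiplicative, is non-increasing since $0 \leq \sumd{D}$, and is concentrated (on any store), so deterministic holds as well. For the inductive step, assume the claim for $\omega_{k-1}$; we must prove it for $\omega_k = \HHH(\omega_{k-1})$. Write $h_0 = \omega_{k-1}$ and $h = \omega_k$, and induct on $\LAP{v}{v'}$.

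The case $v=v'$ gives the identity, which has all four properties. The case $v' \neq \fppd{v}$ reduces by Definition~\ref{def:HHX}(\ref{evalk-transitive}) to a composition of two functions of strictly smaller measure (using Lemma~\ref{lem:LAP-add}), both of which have the four properties by inner induction; it is routine that composition preserves additivity, multiplicativity, non-increasingness, and determinism. For the remaining sub-cases where $v' = \fppd{v}$: the $\skipv$ case is identity; the assignment case is handled by Lemma~\ref{lem:assign-misc}; the random assignment case is handled by Lemma~\ref{lem:rassign-misc} (note that determinism is not claimed here, and if the pCFG is deterministic this sub-case does not arise by Definition~\ref{def:determCFG}); the $\observevv$ case is handled by Lemma~\ref{lem:selectB} (again, deterministic pCFGs have no such node).

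The main obstacle is the branching case, where $\hh{v}{v'}(D) = D'_1 + D'_2$ with $D_i = \selectf{B_i}(D)$ (for $B_1 = B$, $B_2 = \neg B$) and $D'_i$ equal to either $\hh{v_i}{v'}(D_i)$ (when $\LAP{v_i}{v'} < \LAP{v}{v'}$, apply the inner IH) or $\hpd{h_0}{v_i}{v'}(D_i) = \hpd{\omega_{k-1}}{v_i}{v'}(D_i)$ (when $v$ is cycle-inducing at $v_i$, apply the outer IH on $k$). Either way, $D'_i$ is obtained from $D$ by composing $\selectf{B_i}$ (which has all four properties by Lemma~\ref{lem:selectB}) with a function having the four properties; additivity and multiplicativity of $\hh{v}{v'}$ then follow since both $\selectf{B}, \selectf{\neg B}$ and the outer operations distribute over $+$ and commute with scalars. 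For non-increasingness, combine $\sumd{D'_i} \leq \sumd{D_i}$ with the identity $D_1 + D_2 = D$ from Lemma~\ref{lem:selectB} to get $\sumd{\hh{v}{v'}(D)} \leq \sumd{D_1} + \sumd{D_2} = \sumd{D}$. For determinism, assume the pCFG is deterministic and $D$ is concentrated on $s_0$; then exactly one of $\seme{B}s_0, \seme{\neg B}s_0$ holds, so one of $D_1, D_2$ equals $D$ and the other equals $0$. Since $0$ is concentrated and the non-zero side is mapped to a concentrated distribution by the deterministic function supplied by one of the two induction hypotheses, their sum is concentrated, completing the proof.
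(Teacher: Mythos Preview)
Your proof is correct and follows essentially the same approach as the paper: an outer induction on $k$ (with trivial base case) and, for the inductive step, an inner induction on $\LAP{v}{v'}$ along the case analysis of Definition~\ref{def:HHX}, invoking Lemmas~\ref{lem:selectB}, \ref{lem:assign-misc}, and \ref{lem:rassign-misc} for the one-step cases. The paper merely sketches this, whereas you spell out the branching case; one small point there: in the determinism argument you should note that the zero branch maps to $0$ (via multiplicativity, which you have just established), not merely to something concentrated, since the sum of two distributions concentrated on different stores need not be concentrated.
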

\begin{proof}
We do induction in $k$, with the case $k = 0$ trivial
as the function $0$ is obviously additive, multiplicative and non-increasing,
and deterministic.

For the inductive step, we have to prove that the functional $\HHH$
preserves the property of being additive, multiplicative, non-increasing,
and (if the pCFG is deterministic) being deterministic.
But that is an easy induction in $\LAP{v}{v'}$,
using Lemmas~\ref{lem:selectB}, \ref{lem:assign-misc} and
\ref{lem:rassign-misc}.
\end{proof}

\begin{lemma}
\label{lem:fixed-mult-nonincr-determ}
For each $(v,v') \in \PD$,
it holds that $\hpd{\omega}{v}{v'}$ is
additive, multiplicative and non-increasing;
it is even deterministic if the pCFG is deterministic.
\end{lemma}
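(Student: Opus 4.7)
The plan is to lift the per-stage result of Lemma~\ref{lem:k-fixed-mult-nonincr-determ} to the limit. Recall that $\omega = \limit{k}{\omega_k}$ with the limit taken in $\PD \contarrow (\Dist \contarrow \Dist)$, so by the pointwise definition of limits (Lemma~\ref{lem:cont-cpo}) we have, for every $(v,v') \in \PD$, every $D \in \Dist$ and every $s \in \fulls$,
\[
\hpd{\omega}{v}{v'}(D)(s) \;=\; \limit{k}{\hpd{\omega_k}{v}{v'}(D)(s)}.
\]
All four properties will be established by exploiting this identity together with the fact that each $\hpd{\omega_k}{v}{v'}$ already has the property in question.

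For \textbf{additivity} and \textbf{multiplicativity} the argument is immediate and pointwise in $s$. Given $D_1,D_2 \in \Dist$ we write
\[
\hpd{\omega}{v}{v'}(D_1+D_2)(s)
= \limit{k}{\hpd{\omega_k}{v}{v'}(D_1+D_2)(s)}
= \limit{k}{\bigl(\hpd{\omega_k}{v}{v'}(D_1)(s) + \hpd{\omega_k}{v}{v'}(D_2)(s)\bigr)}
\]
using additivity of $\hpd{\omega_k}{v}{v'}$; since both component chains converge, the limit of the sum is the sum of the limits, which equals $\hpd{\omega}{v}{v'}(D_1)(s) + \hpd{\omega}{v}{v'}(D_2)(s)$. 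Multiplicativity is analogous, with the scalar $c$ pulled out of the limit.

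For \textbf{non-increasing} the exchange of sum and limit must be justified, which is exactly what Lemma~\ref{lem:sumlim-limsum} provides. Since $\hpd{\omega_k}{v}{v'}(D) \preceq \hpd{\omega_{k+1}}{v}{v'}(D)$ (by monotonicity of $\HHH$), the family $\{\hpd{\omega_k}{v}{v'}(D)\}_k$ is a chain of distributions with limit $\hpd{\omega}{v}{v'}(D)$, so
\[
\sumd{\hpd{\omega}{v}{v'}(D)}
= \sum_{s \in \fulls} \limit{k}{\hpd{\omega_k}{v}{v'}(D)(s)}
= \limit{k}{\sum_{s \in \fulls} \hpd{\omega_k}{v}{v'}(D)(s)}
\leq \sumd{D},
\]
the last inequality coming from Lemma~\ref{lem:k-fixed-mult-nonincr-determ} applied termwise.

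The one step that needs a genuinely new argument --- and thus the main obstacle --- is \textbf{determinism}, because a pointwise limit of distributions supported on different stores could in principle spread mass. The rescue is that here we have a chain. Assume the pCFG is deterministic and $D$ is concentrated on $s_0$. Each $\hpd{\omega_k}{v}{v'}(D)$ is concentrated by Lemma~\ref{lem:k-fixed-mult-nonincr-determ}; I claim that all the nonzero members of this chain are concentrated on the \emph{same} store. For if $\hpd{\omega_k}{v}{v'}(D)$ is concentrated on $s_k$ with $\hpd{\omega_k}{v}{v'}(D)(s_k) > 0$, then monotonicity gives $\hpd{\omega_{k+1}}{v}{v'}(D)(s_k) > 0$, which forces the concentration point of $\hpd{\omega_{k+1}}{v}{v'}(D)$ to be $s_k$ itself. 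Hence either every $\hpd{\omega_k}{v}{v'}(D)$ is $0$, in which case $\hpd{\omega}{v}{v'}(D) = 0$ is concentrated (on anything), or there is a common store $s^*$ such that $\hpd{\omega_k}{v}{v'}(D)(s) = 0$ for all $s \neq s^*$ and all $k$; taking the pointwise limit preserves this, so $\hpd{\omega}{v}{v'}(D)$ is concentrated on $s^*$.
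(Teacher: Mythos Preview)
Your proof is correct and follows essentially the same approach as the paper's: both lift Lemma~\ref{lem:k-fixed-mult-nonincr-determ} to the limit, handling additivity and multiplicativity by pointwise limits, non-increasing via Lemma~\ref{lem:sumlim-limsum}, and determinism by using the chain property to pin down a common concentration point. Your determinism argument is phrased slightly differently (you propagate the concentration point forward via monotonicity, whereas the paper picks the first index $m$ with nonzero mass and argues the tail vanishes off that store), but the underlying idea is identical.
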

\begin{proof}
The claim follows from Lemma~\ref{lem:k-fixed-mult-nonincr-determ}
and the fact that when $f = \limit{k}{f_k}$ then
\begin{itemize}
\item
if each $f_k$ is additive then $f$ is additive, since
$f(D_1 + D_2) = \limit{k}{f_k(D_1 + D_2)} = \limit{k}{(f_k(D_1) + f_k(D_2))} 
= \limit{k}{f_k(D_1)} + \limit{k}{f_k(D_2)} = f(D_1) + f(D_2)$;
\item
if each $f_k$ is multiplicative then $f$ is multiplicative, since
$f(cD) = \limit{k}{f_k(cD)} = \limit{k}{c f_k(D)} 
= c\;\limit{k}{f_k(D)} = cf(D)$;
\item
if each $f_k$ is non-increasing then $f$ is non-increasing, since
(by Lemma~\ref{lem:sumlim-limsum})
$\sumd{f(D)} = \sumd{\limit{k}{f_k(D)}} 
= \limit{k}{\sumd{f_k(D)}}
\leq \limit{k}{\sumd{D}} = \sumd{D}$;
\item
if each $f_k$ is deterministic then $f$ is deterministic: to see this,
let concentrated $D$ be given;
we must prove that $f(D)$ is concentrated.
If $f(D) = 0$, the claim is obvious.
Otherwise, there exists $m$ and $s_0 \in \fulls$ such 
that $f_m(D)(s_0) > 0$. As $\chain{f_k}{k}$ is a chain, we infer that
for all $n \geq m$ we have $f_n(D)(s_0) > 0$,
which as each $f_n$ is deterministic implies
that for all $s \in \fulls$ with $s \neq s_0$ we have
$f_n(D)(s) = 0$, and thus the desired $f(D)(s) = 0$.
\end{itemize}
\end{proof}

\subsubsection{Examples}
\label{subsec:sem-ex}

\begin{figure}
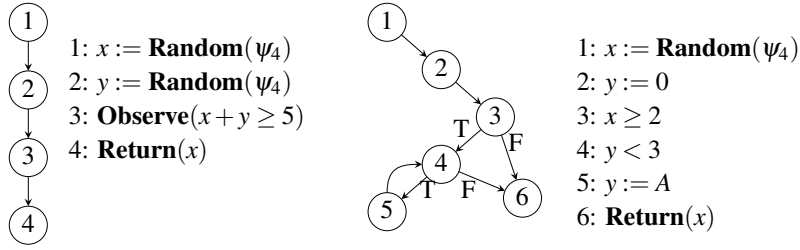

\begin{pgfpicture}{0mm}{5mm}{130mm}{35mm}
\begin{pgfmagnify}{0.9}{0.9}

\pgfnodecircle{v1}[stroke]{\pgfxy(2.5,3.5)}{8pt}

\pgfputat{\pgfxy(2.5,3.4)}{\pgfbox[center,base]{1}}

\pgfnodecircle{v2}[stroke]{\pgfxy(2.5,2.5)}{8pt}

\pgfputat{\pgfxy(2.5,2.4)}{\pgfbox[center,base]{2}}

\pgfnodecircle{v3}[stroke]{\pgfxy(2.5,1.5)}{8pt}

\pgfputat{\pgfxy(2.5,1.4)}{\pgfbox[center,base]{3}}

\pgfnodecircle{v4}[stroke]{\pgfxy(2.5,0.5)}{8pt}

\pgfputat{\pgfxy(2.5,0.4)}{\pgfbox[center,base]{4}}

\pgfsetarrowsend{stealth}
\pgfsetendarrow{\pgfarrowtriangle{4pt}}

\pgfnodeconnline{v1}{v2}

\pgfnodeconnline{v2}{v3}

\pgfnodeconnline{v3}{v4}

\pgfnodecircle{w1}[stroke]{\pgfxy(7.8,3.5)}{8pt}

\pgfputat{\pgfxy(7.8,3.4)}{\pgfbox[center,base]{1}}

\pgfnodecircle{w2}[stroke]{\pgfxy(8.6,2.8)}{8pt}

\pgfputat{\pgfxy(8.6,2.7)}{\pgfbox[center,base]{2}}

\pgfnodecircle{w3}[stroke]{\pgfxy(9.4,2.1)}{8pt}

\pgfputat{\pgfxy(9.4,2.0)}{\pgfbox[center,base]{3}}

\pgfnodecircle{w4}[stroke]{\pgfxy(8.6,1.4)}{8pt}

\pgfputat{\pgfxy(8.6,1.3)}{\pgfbox[center,base]{4}}

\pgfnodecircle{w5}[stroke]{\pgfxy(7.8,0.7)}{8pt}

\pgfputat{\pgfxy(7.8,0.6)}{\pgfbox[center,base]{5}}

\pgfnodecircle{w6}[stroke]{\pgfxy(9.8,0.9)}{8pt}

\pgfputat{\pgfxy(9.8,0.8)}{\pgfbox[center,base]{6}}

\pgfsetarrowsend{stealth}
\pgfsetendarrow{\pgfarrowtriangle{4pt}}

\pgfnodeconnline{w1}{w2}

\pgfnodeconnline{w2}{w3}

\pgfnodeconnline{w3}{w4}

\pgfputat{\pgfxy(8.9,1.8)}{\pgfbox[center,base]{T}}

\pgfputat{\pgfxy(9.7,1.6)}{\pgfbox[center,base]{F}}

\pgfnodeconnline{w3}{w6}

\pgfnodeconnline{w4}{w5}

\pgfputat{\pgfxy(8.4,0.9)}{\pgfbox[center,base]{T}}

\pgfputat{\pgfxy(9.0,0.9)}{\pgfbox[center,base]{F}}

\pgfnodeconnline{w4}{w6}

\pgfnodeconncurve{w5}{w4}{90}{180}{10pt}{10pt}

\pgfputat{\pgfxy(3.1,3.0)}{\pgfbox[left,base]{1: $\rassignv{x}{\psi_4}$}}
\pgfputat{\pgfxy(3.1,2.5)}{\pgfbox[left,base]{2: $\rassignv{y}{\psi_4}$}}
\pgfputat{\pgfxy(3.1,2.0)}{\pgfbox[left,base]{3: $\observev{x+y \geq 5}$}}
\pgfputat{\pgfxy(3.1,1.5)}{\pgfbox[left,base]{4: $\retv{x}$}}

\pgfputat{\pgfxy(10.6,3.0)}{\pgfbox[left,base]{1: $\rassignv{x}{\psi_4}$}}
\pgfputat{\pgfxy(10.6,2.5)}{\pgfbox[left,base]{2: $\assignv{y}{0}$}}
\pgfputat{\pgfxy(10.6,2.0)}{\pgfbox[left,base]{3: $x \geq 2$}}
\pgfputat{\pgfxy(10.6,1.5)}{\pgfbox[left,base]{4: $y < 3$}}
\pgfputat{\pgfxy(10.6,1.0)}{\pgfbox[left,base]{5: $\assignv{y}{A}$}}
\pgfputat{\pgfxy(10.6,0.5)}{\pgfbox[left,base]{6: $\retv{x}$}}

\end{pgfmagnify}
\end{pgfpicture}

\caption{\label{fig:ex12r} The pCFGs $G_1$ and $G_2$ (copied
from Figure~\ref{fig:ex12}).}
\end{figure}

We now illustrate our semantics on the pCFGs from
Figure~\ref{fig:ex12} 
(which for the reader's convenience we
have copied into Figure~\ref{fig:ex12r}); for both pCFGs,
we assume that $\uvar = \{x,y\}$.

\paragraph{Analyzing $G_1$}
Given an arbitrary
distribution $D$ with $\sumd{D} = 1$, let $D_1 = \hpd{\omega}{1}{2}(D)$,
and let $D_2 = \hpd{\omega}{1}{3}(D) = \hpd{\omega}{2}{3}(D_1)$.
We now have the calculation
\begin{eqnarray*}
D_2(s_2) & = & 
\psi_4(s_2(y)) \left( \sum_{s_1 \in \fulls\ \mid\ 
s_2 = \upd{s_1}{y}{s_2(y)}}  D_1(s_1) \right)
\\
& = &
\psi_4(s_2(y)) \left( \sum_{s_1 \in \fulls\ \mid\ 
s_2 = \upd{s_1}{y}{s_2(y)}}  \psi_4(s_1(x))
\left( \sum_{s \in \fulls\ \mid\ s_1 = \upd{s}{x}{s_1(x)}} D(s) \right) \right)
\\
& = &
\psi_4(s_2(y)) \left( \sum_{s_1 \in \fulls\ \mid\ 
s_2 = \upd{s_1}{y}{s_2(y)}}  \psi_4(s_2(x))
\left( \sum_{s \in \fulls\ \mid\ s_1 = \upd{s}{x}{s_2(x)}} D(s) \right) \right)
\\
& = &
\psi_4(s_2(y)) \psi_4(s_2(x)) \left( \sum_{s,s_1 \in \fulls\ \mid\ 
s_2 = \upd{\upd{s}{x}{s_2(x)}}{y}{s_2(y)},\ 
s_1 = \upd{s}{x}{s_2(x)}} D(s) \right)
\\
& = &
\psi_4(s_2(y)) \psi_4(s_2(x)) \left( \sum_{s \in \fulls\ \mid\ 
s_2 = \upd{\upd{s}{x}{s_2(x)}}{y}{s_2(y)}} D(s) \right)
\\
& = &
\psi_4(s_2(y)) \psi_4(s_2(x)) \left( \sum_{s \in \fulls} D(s) \right)
=
\psi_4(s_2(y)) \psi_4(s_2(x))
\end{eqnarray*}
from which we conclude that $D_2(\stotwo{x}{i}{y}{j}) =
\frac{1}{16}$ for $i,j \in \{0..3\}$, and $D_2(s) = 0$ otherwise.

The final distribution $D' = \hpd{\omega}{1}{4}(D)$
is given by $D' = \hpd{\omega}{3}{4}(D_2)$. Hence
(thus $\displaystyle \sumd{D'} = \frac{3}{16}$)
\[
D'(\stotwo{x}{2}{y}{3}) = D'(\stotwo{x}{3}{y}{2})
= D'(\stotwo{x}{3}{y}{3}) = \frac{1}{16} \mbox{ (0 otherwise).}
\]
\paragraph{Analyzing $G_2$}
We restrict our attention to
$\hpd{\omega}{4}{6}$, in particular when applied to a 
generic concentrated
distribution $\Dijr{i}{j}{r}$ 
defined by stipulating
\begin{eqnarray*}
\Dijr{i}{j}{r}(s) & = & r \mbox{ if } s = \stotwo{x}{i}{y}{j} \\
\Dijr{i}{j}{r}(s) & = & 0 \mbox{ otherwise}
\end{eqnarray*}
(Observe that when control first reaches node 4, the distribution is
$\Dijr{2}{0}{0.25} + \Dijr{3}{0}{0.25}$
since $y$ is zero, and for $x$, only the values 2 and 3 lead to
node $4$ while the values 0 and 1 do not.)
Before looking at the various possibilities for the assignment
at node 5, let us do some general reasoning.

Clause~\ref{evalk-transitive} 
in Definition~\ref{def:HHX} (and the definition of $\omega_0$) gives us
\begin{equation}
\label{eq:Dijr-trans}
\forall k \geq 0, \forall D \in Dist:\
\hpd{\omega_k}{5}{6}(D) = \hpd{\omega_k}{4}{6}(\hpd{\omega_k}{5}{4}(D)).
\end{equation}
Observe that
if $j \geq 3$ then $\selectf{y < 3}(\Dijr{i}{j}{r}) = 0$ and 
$\selectf{\neg(y < 3)}(\Dijr{i}{j}{r}) = \Dijr{i}{j}{r}$,
whereas if $j < 3$ then
$\selectf{\neg(y < 3)}(\Dijr{i}{j}{r}) = 0$
and
$\selectf{y < 3}(\Dijr{i}{j}{r}) = \Dijr{i}{j}{r}$.

Since $\LAP{5}{6} > \LAP{4}{6}$, 
clause~\ref{def:evalk:cond} in Definition~\ref{def:HHX}
(substituting $\omega_{k-1}$ for $h_0$)
gives us for $j < 3$ and $k \geq 1$:
$\hpd{\omega_k}{4}{6}(\Dijr{i}{j}{r}) =
\hpd{\omega_{k-1}}{5}{6}(\Dijr{i}{j}{r}) +
\hpd{\omega_k}{6}{6}(0) = 
\hpd{\omega_{k-1}}{5}{6}(\Dijr{i}{j}{r})$
and thus
\begin{equation}
\label{eq:Dijr-true}
\forall j < 3, \forall k \geq 1:\
\hpd{\omega_k}{4}{6}(\Dijr{i}{j}{r}) = 
\hpd{\omega_{k-1}}{5}{6}(\Dijr{i}{j}{r}).
\end{equation}
Similarly, for $j \geq 3$ and $k \geq 1$ we have
$\hpd{\omega_k}{4}{6}(\Dijr{i}{j}{r}) =
\hpd{\omega_{k-1}}{5}{6}(0) +
\hpd{\omega_k}{6}{6}(\Dijr{i}{j}{r}) = 0 + \Dijr{i}{j}{r}$; thus
\begin{equation}
\label{eq:Dijr-false} 
\forall j \geq 3, \forall k \geq 1:\
\hpd{\omega_k}{4}{6}(\Dijr{i}{j}{r}) = \Dijr{i}{j}{r}.
\end{equation}
We now look at the various cases for the assignment at node 5
(always assuming $i \in \{0..3\}$).
\begin{description}
\item[$\assignv{y}{1}$]
For all $k \geq 1$, and all $j < 3$, we have
$\hpd{\omega_k}{5}{4}(\Dijr{i}{j}{r}) = \Dijr{i}{1}{r}$
and by (\ref{eq:Dijr-trans}) thus
$\hpd{\omega_k}{5}{6}(\Dijr{i}{j}{r}) = \hpd{\omega_k}{4}{6}(\Dijr{i}{1}{r})$
(which also holds for $k \geq 0$).
Thus from (\ref{eq:Dijr-true}) we get that
$\hpd{\omega_k}{4}{6}(\Dijr{i}{j}{r}) = 
\hpd{\omega_{k-1}}{4}{6}(\Dijr{i}{1}{r})$ for all $k \geq 1$ and $j < 3$.
As $\omega_0 = 0$, we see by induction that
$\hpd{\omega_k}{4}{6}(\Dijr{i}{j}{r}) = 0$ 
for all $k \geq 0$ and $j < 3$, and for all $j < 3$ we thus have
\[
\hpd{\omega}{4}{6}(\Dijr{i}{j}{r}) = 0
\]
which confirms that from node 4 
the probability of termination is zero (actually termination is impossible)
and that certainly $\hpd{\omega}{4}{6}$ is not sum-preserving.
\item[$\assignv{y}{y+1}$]
For all $k \geq 1$, and all $j < 3$, we have
$\hpd{\omega_k}{5}{4}(\Dijr{i}{j}{r}) = \Dijr{i}{j+1}{r}$
and by (\ref{eq:Dijr-trans}) thus
$\hpd{\omega_k}{5}{6}(\Dijr{i}{j}{r}) = \hpd{\omega_k}{4}{6}(\Dijr{i}{j+1}{r})$
(which also holds for $k \geq 0$).
Thus from (\ref{eq:Dijr-true}) we get that
$\hpd{\omega_k}{4}{6}(\Dijr{i}{j}{r}) = 
\hpd{\omega_{k-1}}{4}{6}(\Dijr{i}{j+1}{r})$ for all $k \geq 1$ and $j < 3$.
We infer by~(\ref{eq:Dijr-false}) that for
all $j < 3$, and all $k > 3-j$, 
\[
\hpd{\omega_k}{4}{6}(\Dijr{i}{j}{r}) =
\hpd{\omega_{k-(3-j)}}{4}{6}(\Dijr{i}{3}{r}) = \Dijr{i}{3}{r}
\]
and thus we infer that for all $j < 3$ we have
\[
\hpd{\omega}{4}{6}(\Dijr{i}{j}{r}) = \Dijr{i}{3}{r}
\]
which confirms that when $y < 3$ the loop terminates with $y = 3$.
\item[$\rassignv{y}{\psi_4}$]
For all $k \geq 1$, and all $j < 3$, 
for $s'$ with $s'(x) = i$ and $s'(y) \in \{0..3\}$ we have
\[
\hpd{\omega_k}{5}{4}(\Dijr{i}{j}{r})(s')
=
\psi_4(s'(y)) \sum_{s \in \fulls\ \mid\ s' = \upd{s}{y}{s'(y)}} \Dijr{i}{j}{r}(s)
=
0.25 \cdot r =
0.25 \cdot \sum_{q = 0,1,2,3} \Dijr{i}{q}{r}(s')
\]
from which we infer that
\[
\hpd{\omega_k}{5}{4}(\Dijr{i}{j}{r}) = 
0.25 \cdot \sum_{q = 0,1,2,3} \Dijr{i}{q}{r}
\]
and by (\ref{eq:Dijr-trans}), together with the fact
(Lemma~\ref{lem:k-fixed-mult-nonincr-determ}) that
$\hpd{\omega_k}{4}{6}$ is additive and multiplicative, thus
\[
\hpd{\omega_k}{5}{6}(\Dijr{i}{j}{r}) = 
0.25 \cdot \sum_{q = 0,1,2,3} (\hpd{\omega_k}{4}{6}(\Dijr{i}{q}{r})
\]
(which also holds for $k \geq 0$)
so from (\ref{eq:Dijr-true}) we get that
\begin{equation}
\label{eq:Dijr-recur}
\forall k \geq 1, j < 3:
\hpd{\omega_k}{4}{6}(\Dijr{i}{j}{r}) = 
0.25 \cdot \left(\sum_{q = 0,1,2,3}\hpd{\omega_{k-1}}{4}{6}(\Dijr{i}{q}{r})\right).
\end{equation}
One can easily prove by induction in $k$ that if
$j_1 < 3$ and $j_2 < 3$ then
$\hpd{\omega_k}{4}{6}(\Dijr{i}{j_1}{r}) =
\hpd{\omega_k}{4}{6}(\Dijr{i}{j_2}{r})$
so if we define
$D_k = \hpd{\omega_k}{4}{6}(\Dijr{i}{0}{r})$
we have
$\hpd{\omega_k}{4}{6}(\Dijr{i}{j}{r}) = D_k$ for all $j < 3$.
We now establish
\begin{equation}
\label{eq:Dijr-limit}
\limit{k}{D_k} = \Dijr{i}{3}{r}
\end{equation}
which will demonstrate
that a loop from node 4 will terminate,
with $y = 3$, with probability 1.

To show (\ref{eq:Dijr-limit}), observe that 
(\ref{eq:Dijr-recur}) together with (\ref{eq:Dijr-false})
makes it easy to prove by induction that
$D_k(s) = 0 = \Dijr{i}{3}{r}(s)$ 
for all $k \geq 0$ when
$s  \neq \stotwo{x}{i}{y}{3}$, 
and also gives the recurrences
\begin{eqnarray*}
D_0(s_3) & = & 0 \\
D_1(s_3) & = & 0 \\
D_k(s_3) & = & 0.75 \cdot D_{k-1}(s_3) + 0.25 \cdot r \mbox{ for }
k \geq 2
\end{eqnarray*}
when $s_3 = \stotwo{x}{i}{y}{3}$.
We must prove that $\limit{k}{D_k(s_3)} = r$
(as $\Dijr{i}{3}{r}(s_3) = r$)
but this follows, with $a = 0.75$ and $b = 0.25$, from a general result: \begin{lemma}
If $\chain{x_i}{i}$ is a sequence of non-negative reals, satisfying 
$x_0 = x_1 = 0$ and $x_k = a\,x_{k-1} +b\,r$ for $k > 1$ 
where $a,b,r$ are non-negative reals
with $b > 0$ and $a + b = 1$, then $\limit{i}{x_i} = r$.
\end{lemma}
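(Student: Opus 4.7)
The plan is to solve the linear recurrence explicitly by shifting to the fixed point. First, I would observe that the only candidate fixed point of the map $x \mapsto a x + b r$ is $r$ itself: if $x^\star = a x^\star + b r$ then $(1-a) x^\star = b r$, and since $a+b=1$ with $b > 0$ we have $1-a = b > 0$, so $x^\star = r$.

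Next, I would introduce the error sequence $y_k = x_k - r$. Substituting into the recurrence gives, for $k > 1$,
\[
y_k = x_k - r = a x_{k-1} + b r - r = a x_{k-1} - a r = a \, y_{k-1}.
\]
With the initial data $y_0 = -r$ and $y_1 = -r$, a one-line induction then yields $y_k = a^{k-1}\, y_1 = -r\, a^{k-1}$ for all $k \geq 1$, so $x_k = r\,(1 - a^{k-1})$ for $k \geq 1$.

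Finally, I would note that $a + b = 1$ with $a \geq 0$ and $b > 0$ forces $0 \leq a < 1$, hence $a^{k-1} \to 0$ as $k \to \infty$. Therefore $x_k \to r$, as required.

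There is no real obstacle here; the only mild point to flag is the need for $b > 0$, which is precisely what guarantees $a < 1$ and hence convergence. In the intended application ($a = 0.75$, $b = 0.25$, arising from the probability $0.25$ that the randomly assigned $y$ lands in $\{3\}$ at each iteration), this hypothesis is exactly the statement that the loop has a strictly positive per-iteration probability of exiting.
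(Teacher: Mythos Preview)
Your proof is correct, but it takes a different route from the paper's. The paper argues qualitatively: it first shows by induction that the sequence is increasing, then that $x_i \leq r$ for all $i$ (otherwise the recurrence would force a decrease), so the limit exists; finally it passes to the limit in the recurrence $x_k = a\,x_{k-1} + b r$ to obtain $(1-a)\lim_i x_i = b r$ and hence $\lim_i x_i = r$. You instead solve the recurrence in closed form by shifting to the fixed point, obtaining $x_k = r(1 - a^{k-1})$ for $k \geq 1$, and then read off convergence from $0 \leq a < 1$. Your approach is more elementary and yields strictly more information (the explicit formula and hence the rate of convergence), while the paper's argument is more in keeping with its domain-theoretic style of reasoning about chains and least upper bounds; either is perfectly adequate here.
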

\begin{proof}
Observe that: \emph{(i)} $\chain{x_i}{i}$ is a chain
(as can be seen by induction since $x_0 = x_1 \leq x_2$ and
if $x_k \leq x_{k+1}$ then $x_{k+1} \leq x_{k+2}$);
\emph{(ii)} $x_i \leq r$ for all $i$ since if $x_k > r$ for some $k$
then  $x_{k+1} = ax_k + br = (1-b)x_k + br = x_k + b(r-x_k) < x_k$
which contradicts $\chain{x_i}{i}$ being a chain;
\emph{(iii)} thus $\limit{i}{x_i} < \infty$ and
since $\limit{i}{x_i} = a \cdot \limit{i}{x_i} + br$
we get $b \cdot \limit{i}{x_i} = (1-a)\limit{i}{x_i} = br$
from which we infer the desired $\limit{i}{x_i} = r$.
\end{proof}
\end{description}

\section{The Structured Probabilistic Language}
We now present the language for probabilistic programs
used in, \eg, \cite{Gor+etal:ICSE-2014,Hur+etal:PLDI-2014}.
This language is structured (unlike the pCFG-based language presented
in Section~\ref{sec:cfg-lang}),
that is a program is built compositionally from constructs
that combine basic constructs.

\subsection{Syntax}
\label{subsec:struct-syntax}
\begin{figure}
\begin{eqnarray*}
S & ::= & \skipc \\
 & \mid & \assignv{x}{E} \\
 & \mid & \rassignv{x}{\psi} \\
 & \mid & \observev{B} \\
 & \mid & \seqc{S_1}{S_2} \\
 & \mid & \ifc{B}{S_1}{S_2} \\
 & \mid & \whilec{B}{S} 
\\[2mm]
P & ::= & \progc{S}{E}
\end{eqnarray*}
\caption{\label{fig:struct-lang} The grammar defining a structured probabilistic statement, and program.}
\end{figure}

A program is a \emph{statement} followed by the return of an expression,
where a statement $S$ is defined by the BNF
in Figure~\ref{fig:struct-lang}
(the syntactic details differ slightly from
what is presented in \cite{Gor+etal:ICSE-2014,Hur+etal:PLDI-2014}).
That is, a structured statement $S$ 
is either $\skipc$, an assignment
$\assignv{x}{E}$, a random assignment $\rassignv{x}{\psi}$,
a conditioning statement $\observev{B}$,
a sequential composition $\seqc{S_1}{S_2}$,
a conditional $\ifc{B}{S_1}{S_2}$,
or a while loop $\whilec{B}{S}$.

\subsection{Translation to pCFG Language}
We now present a translation $\trl$
from probabilistic structured programs
(Section~\ref{subsec:struct-syntax})
to pCFGs
(Section~\ref{sec:CFG}).
Recall that a pCFG has a special node $\startv$ 
from which there is a path to all other nodes,
and a unique $\finalv$ node with no outgoing edges
(if it has a label
it will be of the form $\retv{x}$)
to which there is a path from all other nodes.
\begin{definition}[Translation from Structured Statements to pCFGs]
\label{def:translateS}
For a structured statement $S$, we define 
a pCFG $\trl(S)$ whose $\finalv$ node has no label,
by structural induction in $S$:
\begin{itemize}
\item 
if $S$ is of the form $\skipc$, or
$\assignv{x}{E}$, or
$\rassignv{x}{\psi}$, or
$\observev{B}$, then $\trl(S)$ is a pCFG with 2 nodes:
$\startv$ with label as $S$, and $\finalv$ with no label;
there is an edge from the $\startv$ node to the $\finalv$ node.
\item
if $S$ is of the form $\seqc{S_1}{S_2}$, we inductively construct a pCFG 
$\trl(S_1)$ with $\startv$ node $v_1$ and unlabeled 
$\finalv$ node $v'_1$, 
and a pCFG 
$\trl(S_2)$ with $\startv$ node $v_2$ and unlabeled $\finalv$ node $v'_2$.
The pCFG $\trl(\seqc{S_1}{S_2})$ is then constructed 
by taking the union of $\trl(S_1)$ and $\trl(S_2)$
(which must have disjoint node sets),
and augmenting the result as follows:
\begin{enumerate}
\item
let the $\startv$ node be $v_1$;
\item
let the $\finalv$ node be $v'_2$;
\item
give $v'_1$ the label $\skipv$,
and add an edge from $v'_1$ to $v_2$.
\end{enumerate}
\item
if $S$ is of the form $\ifc{B}{S_1}{S_2}$, we inductively construct a pCFG 
$\trl(S_1)$ with $\startv$ node $v_1$ and unlabeled 
$\finalv$ node $v'_1$, 
and a pCFG 
$\trl(S_2)$ with $\startv$ node $v_2$ and unlabeled $\finalv$ node $v'_2$.
The pCFG $\trl(\ifc{B}{S_1}{S_2})$ is then constructed 
by taking the union of $\trl(S_1)$ and $\trl(S_2)$
(which must have disjoint node sets),
and augmenting the result as follows:
\begin{enumerate}
\item
let the $\startv$ node be a \emph{fresh} branching node $v$ 
with condition $B$,
$\true$-successor $v_1$, and $\false$-successor $v_2$;
\item
let the $\finalv$ node be a \emph{fresh} unlabeled node $v'$;
\item
give $v'_1$ and $v'_2$ the label $\skipv$,
and add edges from $v'_1$ to $v'$ and from $v'_2$ to $v'$.
\end{enumerate}
\item
if $S$ is of the form $\whilec{B}{S_1}$, we inductively construct a pCFG 
$\trl(S_1)$ with $\startv$ node $v_1$ and unlabeled 
$\finalv$ node $v'_1$.
The pCFG $\trl(\whilec{B}{S_1})$ is then constructed
by augmenting $\trl(S_1)$ as follows:
\begin{enumerate}
\item
let the $\finalv$ node be a \emph{fresh} unlabeled node $v'$;
\item
let the $\startv$ node be a \emph{fresh} branching node $v$
with condition $B$,
$\true$-successor $v_1$, and $\false$-successor $v'$;
\item
give $v'_1$ the label $\skipv$, and add an edge
from $v'_1$ to $v$.
\end{enumerate}
\end{itemize}
\end{definition}
It is easy to verify by induction that the pCFGs constructed
by Definition~\ref{def:translateS} are indeed well-formed,
in particular,
that all nodes are reachable
from the $\startv$ node, and can reach the $\finalv$ node.
\begin{definition}[Translation from Structured Programs to pCFGs]
\label{def:translateP}
For a structured program $P \equiv \progc{S}{E}$,
we define a pCFG $\trl(P)$ as follows:
first construct the pCFG $\trl(S)$; 
then label its $\finalv$ node (unlabeled so far) with
$\retv{E}$.
\end{definition}
\begin{example}
\label{ex:p1}
Consider the pCFG $G_1$ depicted in Figure~\ref{fig:ex12}(left).
We have $G_1 = \trl(P_1)$ where
\[
P_1 \eqdef \progc{\seqc{\seqc{\rassignv{x}{\psi_4}}{\rassignv{y}{\psi_4}}}{\observev{x+y \geq 5}}}{x}
\]
\end{example}
\begin{example}
\label{ex:p2}
Let the structured program $P_2$ be given by
\[
P_2 \eqdef
\progc{\seqc{\rassignv{x}{\psi_4}}{\seqc{\assignv{y}{0}}{\ifc{x \geq 2}{\whilec{y < 3}{\assignv{y}{A}}}{\skipc}}}}{x}
\]
It is easy to see that $\trl(P_2)$ can be simplified
(by compression of edges from nodes labeled $\skipv$)
into a pCFG that is isomorphic to $G_2$ 
depicted in Figure~\ref{fig:ex12}(right).
\end{example}
For a given pCFG $G$, there may not exist a structured program $P$
such that $G$ is isomorphic to a simplification of $\trl(P)$.
A necessary condition is that 
$G$ is ``reducible''.

\subsection{Semantics}
\label{subsec:struct-sem}
We now present,
following~\cite{Gor+etal:ICSE-2014,Hur+etal:PLDI-2014},
the semantics of the structured language.
The semantics manipulates ``expectation functions'' where an
expectation function $F$ is a function from stores to non-negative
reals; we can think of $F(s)$ as the expected return value for store
$s$. The semantics of a statement $S$, written $\semc{S}$,
is a transformation of expectation functions; with $\semc{S} F' = F$,
one should think
of $F'$ as taking a store \emph{after} $S$ and giving 
its expected return value, and $F$ as taking a store 
\emph{before} $S$ and giving its expected return value.

\begin{figure}

\begin{eqnarray*}
F = \semc{\skipc} F'  & \mbox{iff} &  F = F'
\\[2mm]
F = \semc{\assignv{x}{E}}F' & \mbox{iff} & F(s) = F'(\upd{s}{x}{\seme{E}s})
\mbox{ for all } s
\\[2mm]
F = \semc{\rassignv{x}{\psi}}F' & \mbox{iff} & 
F(s) = \sum_{z \in \Zz}\psi(z)F'(\upd{s}{x}{z}) \mbox{ for all } s
\\[2mm]
F = \semc{\observev{B}}F' & \mbox{iff} & F(s) =
F'(s) \mbox{ for all $s$ with } \seme{B}s \\[1mm]
 & \mbox{and} & F(s) = 0 \mbox{ for all other $s$}
\\[2mm]
F = \semc{\seqc{S_1}{S_2}}F' & \mbox{iff} &  
F = \semc{S_1}(\semc{S_2}F')
\\[2mm]
F = \semc{\ifc{B}{S_1}{S_2}}F' & \mbox{iff} & 
F(s) = \semc{S_1}(F')(s) \mbox{ for all $s$ with } \seme{B}s \\[1mm]
& \mbox{and} & F(s) = \semc{S_2}(F')(s) \mbox{ for all other $s$}
\\[2mm]
F = \semc{\whilec{B}{S}}F' & \mbox{iff} & F(s) = \limit{k}{F_k(s)}
\mbox{ for all } s \\[1mm]
& \mbox{where} & F_0(s) = 0 \\[1mm]
& \mbox{and} & F_{k+1}(s) = \semc{S}(F_k)(s) \mbox{ if } \seme{B}s \\[1mm]
& \mbox{and} & F_{k+1}(s) = F'(s) \mbox{ otherwise}
\end{eqnarray*}
\caption{\label{fig:struct-sem} The semantics
of a structured probabilistic statement.}
\end{figure}

In Figure~\ref{fig:struct-sem}, we define $\semc{S}$ by a definition
inductive in $S$. Let us explain a few cases:
\begin{itemize}
\item
the expected return value for a store before
an assignment $\assignv{x}{E}$ 
equals the expected return value for the updated store;
\item
the expected return value for a store before
a random assignment $\rassignv{x}{\psi}$
can be found by taking the weighted average of the expected return
values for the possible updated stores;
\item
the expected return value for a store before
a conditioning statement $\observev{B}$ is 0 if $B$ is not true;
\item
the semantics of 
a while loop $\whilec{B}{S}$ can be found as the limit
of the semantics of the $k$th iteration, $\whileck{B}{S}{k}$,
which is defined inductively in $k$ as follows:
\begin{eqnarray*}
\whileck{B}{S}{0} & = & \observev{\false} 
\\[1mm]
\whileck{B}{S}{k+1} & = &
\ifc{B}{(\seqc{S}{\whileck{B}{S}{k}})}{\skipc}
\end{eqnarray*}
\end{itemize}

\begin{example}
\label{ex2:standard}
Consider the statement $S_1$ given by (cf.~Example~\ref{ex:p1})
\[
S_1 \equiv \rassignv{x}{\psi_4};\ \rassignv{y}{\psi_4};\ \observev{x+y \geq 5}.
\]
For all $F$ that map stores into non-negative reals,
and all stores $s$, we have
\begin{eqnarray*}
\semc{S_1} F\; s & = &
\semc{\rassignv{x}{\psi_4}} 
 (\semc{\rassignv{y}{\psi_4};\ \observev{x+y \geq 5}} F)\; s
\\ 
& = &
\sum_{q \in 0..3} \frac{1}{4} 
  (\semc{\rassignv{y}{\psi_4};\ \observev{x+y \geq 5}} F\; \upd{s}{x}{q})
\\ 
& = &
\sum_{q \in 0..3} \frac{1}{4} 
  (\sum_{q' \in 0..3} \frac{1}{4}
    (\semc{\observev{x+y \geq 5}} F\; \upd{\upd{s}{x}{q}}{y}{q'}))
\\ 
& = &
\frac{1}{16}
\left(F(\upd{\upd{s}{x}{2}}{y}{3}) +
F(\upd{\upd{s}{x}{3}}{y}{2}) +
F(\upd{\upd{s}{x}{3}}{y}{3})\,\right).
\end{eqnarray*}
\end{example}
For a program $P = \progc{S}{E}$, the expectation function at the
end will map $s$ into $\seme{E}s$,
and thus the expectation function at the beginning appears to
be given as $\semc{S}(\lambda s.\seme{E}s)$.
But this assumes that runs that fail conditioning
statements count as zero; such runs should rather not be taken
into account at all. This motivates the following 
definition~\cite{Gor+etal:ICSE-2014} of
the \emph{normalized} semantics of a structured program:
\begin{equation}
\label{eq:norm-sem}
\semc{\progc{S}{E}} = \frac{\semc{S}(\lambda s. \seme{E}{s})(\bot)}%
{\semc{S}(\lambda s. 1)(\bot)}
\end{equation}
where $\bot$ is an ``initial store''
(if we demand that all variables are defined before they are used
then the choice of initial store is irrelevant).

To illustrate this definition, let us
look at $P_1$ as defined in Example~\ref{ex:p1}.
Here $P_1 = \progc{S_1}{x}$ with $S_1$ defined as in
Example~\ref{ex2:standard}, and from that example we see 
(since $\seme{x}s = s(x)$) that
\begin{eqnarray*}
\semc{S_1} (\lambda s.\seme{x}s)\ \bot & = & \frac{1}{16} (2 + 3 + 3) = \frac{8}{16}
\\[1mm]
\semc{S_1} (\lambda s.1)\ \bot & = & \frac{1}{16} (1 + 1 + 1) = \frac{3}{16}
\end{eqnarray*}
and hence we see by (\ref{eq:norm-sem}) that
\[
\semc{P_1} = \frac{\semc{S_1}(\lambda s. \seme{x}{s})(\bot)}%
{\semc{S_1}(\lambda s. 1)(\bot)} = \mathbf{\frac{8}{3}}.
\]
This makes sense: if $P_1$ terminates then $x+y \geq 5$ which
holds in 3 cases; in two cases, $x = 3$ whereas in one case, $x = 2$,
for a weighted average of $2 \frac{2}{3}$.

\section{Adequacy Result for the Two Semantics}
\label{sec:consistent}

To motivate how the semantics in Section~\ref{subsec:struct-sem} 
relates to the semantics
in Section~\ref{sec:sem},
consider $S_1$ as defined in Example~\ref{ex2:standard}.
Then, cf.~Example~\ref{ex:p1}, $\trl(S_1)$ is the pCFG $G_1$ depicted
in Figure~\ref{fig:ex12}(left), except that node 4 is unlabeled.

In Example~\ref{ex2:standard} we saw that
if $F = \semc{S_1}F'$ for some $F'$ then for all stores $s$ we have
\[
F(s) = 
\frac{1}{16}
\left(F'(\upd{\upd{s}{x}{2}}{y}{3}) +
F'(\upd{\upd{s}{x}{3}}{y}{2}) +
F'(\upd{\upd{s}{x}{3}}{y}{3})\,\right).
\]
In Section~\ref{subsec:sem-ex},
we saw that if
$D' = \hpd{\omega}{1}{4}(D)$
for some $D$ with $\sumd{D} = 1$ then
\begin{eqnarray*}
D'(s) & = & \displaystyle \frac{1}{16} \mbox{ if }
s \in \{\stotwo{x}{2}{y}{3},\ \stotwo{x}{3}{y}{2},\ \stotwo{x}{3}{y}{3}\}
\\[1mm]
D'(s) & = & 0 \mbox{ otherwise}.
\end{eqnarray*}
We now observe that 
(with the first equality due to $F(s)$ not depending on $s$
as $\uvar = \{x,y\}$)
\begin{eqnarray*}
\sum_{s \in \fulls} F(s) D(s) & = &
 F(s) \sum_{s \in \fulls} D(s)
=
 F(s)
\\[1mm]
& = &
\frac{1}{16}
F'\stotwo{x}{2}{y}{3} +
\frac{1}{16}
F'\stotwo{x}{3}{y}{2} +
\frac{1}{16}
F'\stotwo{x}{3}{y}{3}
\\ & = &
\sum_{s \in \fulls} F'(s) D'(s).
\end{eqnarray*}
And this is indeed an instance of the general result
relating the two semantics:
\begin{thm}[adequacy]
\label{thm:semantics-consistent}
Let $S$ be a structured statement, and
let the pCFG $\trl(S)$ have
$\startv$ node $v$,
and (unlabeled) $\finalv$ node $v'$,
and meaning $\omega$ (cf.~Definition~\ref{def:omega}).

If $\semc{S}F' = F$ 
and $\hpd{\omega}{v}{v'}(D) = D'$ 
then
\[
\sum_{s \in \fulls} F(s) D(s) = \sum_{s \in \fulls} F'(s) D'(s).
\]
\end{thm}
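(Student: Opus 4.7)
My plan is a structural induction on $S$, supported by an auxiliary \emph{locality lemma}: whenever a subgraph of the form $\trl(S_0)$ is embedded inside a larger pCFG via the compositional Definition~\ref{def:translateS}, the meaning $\hpd{\omega}{v}{v'}$ on pairs internal to the subgraph, computed in the full pCFG, coincides with the standalone meaning of $\trl(S_0)$; moreover, the same holds for each finite approximation $\omega_k$. This is immediate by induction on $\LAP{v}{v'}$ from Definition~\ref{def:HHX}, since every recursive call refers only to nodes strictly between $v$ and $v'$. The four base cases ($\skipc$, $\assignv{x}{E}$, $\rassignv{x}{\psi}$, $\observev{B}$) are then direct: $\trl(S)$ has exactly two nodes, $\hpd{\omega}{v}{v'}$ equals $\lambda D.D$, $\assignf{x}{E}$, $\rassignf{x}{\psi}$, or $\selectf{B}$ respectively, and matching each with the corresponding line of Figure~\ref{fig:struct-sem} reduces the identity to a routine exchange of summations.

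For sequential composition $\seqc{S_1}{S_2}$, $v'_1$ and $v_2$ are proper postdominators of $v_1$, and iterated application of clause~\ref{evalk-transitive} of Definition~\ref{def:HHX} (plus the trivial skip-step from $v'_1$ to $v_2$) yields $\hpd{\omega}{v_1}{v'_2}(D) = \hpd{\omega}{v_2}{v'_2}(\hpd{\omega}{v_1}{v'_1}(D))$; the locality lemma combined with the induction hypothesis applied first to $S_2$ (expectation $F'$, distribution $\hpd{\omega}{v_1}{v'_1}(D)$) and then to $S_1$ (expectation $\semc{S_2}F'$, distribution $D$) closes the case. The conditional $\ifc{B}{S_1}{S_2}$ case is parallel: neither successor of the fresh branching start is cycle-inducing, so clause 3(e) combined with transitivity gives $\hpd{\omega}{v}{v'}(D) = \hpd{\omega}{v_1}{v'_1}(\selectf{B}D) + \hpd{\omega}{v_2}{v'_2}(\selectf{\neg B}D)$, and splitting $\sum_s F(s)D(s)$ on the truth of $\seme{B}s$ via Lemma~\ref{lem:selectB} then applying the induction hypothesis on each branch closes the case.

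The main obstacle is the while loop $\whilec{B}{S_1}$. Here the fresh branching start $v$ has $\fppd{v}=v'$, $\LAP{v}{v'}=1$, and $\LAP{v_1}{v'}=\LAP{v_1}{v'_1}+2$, so $v$ is cycle-inducing via its true-successor. Unfolding clause 3(e) and transitivity through $v'_1$ (skip) and $v$, the fixed-point property $\omega=\HHH(\omega)$ yields
\[
\hpd{\omega}{v}{v'}(D) = \hpd{\omega}{v}{v'}\bigl(\hpd{\omega}{v_1}{v'_1}(\selectf{B}D)\bigr) + \selectf{\neg B}D.
\]
Correspondingly, $F = \limit{k}{F_k}$ from Figure~\ref{fig:struct-sem}. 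Introducing $L(U)(D) = U(\hpd{\omega}{v_1}{v'_1}(\selectf{B}D)) + \sum_s F'(s)\selectf{\neg B}(D)(s)$ as an operator on functionals $U : \Dist \to \mathbb{R}_{\geq 0}$, I would prove by inner induction on $k$ the identity $\sum_s F_k(s) D(s) = L^k(\lambda D.0)(D)$ for all $D$ --- the step using Lemma~\ref{lem:selectB} to split on $B$, the outer structural IH on $S_1$ (via locality on $\hpd{\omega}{v_1}{v'_1}$) to rewrite the $B$-true part, and the inner IH on the recursive call. Passing to the limit in $k$ via Lemma~\ref{lem:sumlim-limsum} gives $\sum_s F(s)D(s) = \limit{k}{L^k(\lambda D.0)(D)}$.

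The hardest sub-step is then establishing that $\limit{k}{L^k(\lambda D.0)(D)} = \sum_s F'(s) \hpd{\omega}{v}{v'}(D)(s)$. Setting $T(h)(D) = h(\hpd{\omega}{v_1}{v'_1}(\selectf{B}D)) + \selectf{\neg B}D$, so that $L^k(\lambda D.0)(D) = \sum_s F'(s) T^k(\lambda D.0)(D)(s)$, this reduces to $\hpd{\omega}{v}{v'} = \limit{k}{T^k(\lambda D.0)}$. The subtlety is that the global iterate $\hpd{\omega_k}{v}{v'}$ is \emph{not} literally $T^k(\lambda D.0)$, because during iteration $k$ the embedded $\trl(S_1)$ uses the approximation $\hpd{\omega_{k-1}}{v_1}{v'_1}$ rather than the limit $\hpd{\omega}{v_1}{v'_1}$. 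A monotone diagonal argument reconciles the two: writing $\alpha_k = \hpd{\omega_k}{v}{v'}$ and $T_k(h)(D) = h(\hpd{\omega_k}{v_1}{v'_1}(\selectf{B}D)) + \selectf{\neg B}D$, the recursion gives $\alpha_{k+1} = T_k(\alpha_k)$; since $T_k \preceq T$ (as $\hpd{\omega_k}{v_1}{v'_1}$ is monotone in $k$ with limit $\hpd{\omega}{v_1}{v'_1}$) one gets $\alpha_k \preceq T^k(\lambda D.0)$, while conversely $T^k(\lambda D.0) \preceq T^k(\hpd{\omega}{v}{v'}) = \hpd{\omega}{v}{v'} = \limit{k}{\alpha_k}$, so both limits agree.
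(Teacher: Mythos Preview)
Your proposal is correct and takes essentially the same approach as the paper: your iterates $T^k(\lambda D.0)$ coincide with the paper's auxiliary sequence $g_k$, your identity $\sum_s F_k(s)D(s) = L^k(\lambda D.0)(D)$ is the paper's key equation relating $F_k$ to $g_k$, and your sandwich $\alpha_k \preceq T^k(\lambda D.0) \preceq \hpd{\omega}{v}{v'}$ is exactly the pair of inequalities the paper uses to identify $\lim_k g_k$ with $\hpd{\omega}{v}{v'}$. The only differences are cosmetic---operators versus explicit sequences, and your slightly slicker upper bound via the fixed-point identity $T(\hpd{\omega}{v}{v'})=\hpd{\omega}{v}{v'}$ in place of the paper's hands-on induction.
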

This result shows that we do not lose any information by using the semantics 
in Section~\ref{sec:sem}, in that for any structured statement $S$,
and any expectation function $F'$,
we can retrieve $F = \semc{S} F'$ from $\omega$:
for given $s_0 \in \fulls$, 
define $D_0$ such that $D_0(s_0) = 1$ but $D_0(s) = 0$ otherwise;
with $D'_0 = \hpd{\omega}{v}{v'}(D_0)$ we then have
\begin{equation}
\label{eq:FDex}
F(s_0) = 
\sum_{s \in \fulls} F(s) D_0(s) =
\sum_{s \in \fulls} F'(s) D'_0(s).
\end{equation}
If the statement $S$ (and thus also the pCFG $\trl(S)$)
is deterministic, 
that is it contains
no random assignments or conditioning 
(cf.~Definition~\ref{def:determCFG}),
then by Lemma~\ref{lem:fixed-mult-nonincr-determ} 
we see that $\hpd{\omega}{v}{v'}$ is deterministic
(cf.~Section~\ref{subsec:sem-domain})
and thus there will be 
some $s'_0 \in \fulls$ such that $D'_0$ is concentrated on $s'_0$.
Then (\ref{eq:FDex}) gives
the equation 
\[
F(s_0) = F'(s'_0)D'_0(s'_0).
\]
Since it is easy to see 
(as when proving Lemma~\ref{lem:fixed-mult-nonincr-determ})
that for a pCFG without random assignments, if $D$ maps into integers
then also $f(D)$ will map into integers, there are two possibilities:
\begin{itemize}
\item
$D'_0(s'_0) = 0$, which will happen if the program loops
when run on input store $s_0$;
in that case, $F(s_0) = 0$ which reflects that then the expected
return value for $s_0$ is zero.
\item
$D'_0(s'_0) = 1$, which will happen if the program terminates on $s_0$;
in that case, $F(s_0) = F'(s'_0)$ which reflects that then the
expected return value for the 
initial store equals the actual return value in the final store.
\end{itemize}

\paragraph{Proof of Theorem~\ref{thm:semantics-consistent}}
We do structural induction in $S$, with a case analysis.
\begin{itemize}
\item
The case with $S = \skipc$ is trivial,
as then $F = F'$ and $D' = D$.
\item
For the case $S = \assignv{x}{E}$, 
the claim follows from the calculation
\begin{eqnarray*}
\sum_{s' \in \fulls} F'(s') D'(s')
& = & 
\sum_{s' \in \fulls} F'(s') 
\left( \sum_{s \in \fulls\ \mid\ s' = \upd{s}{x}{\seme{E}s}} D(s)\right)
= 
\sum_{s' \in \fulls, s \in \fulls\ \mid\ s' = \upd{s}{x}{\seme{E}s}} F'(s') D(s)
\\ & = &
\sum_{s \in \fulls} F'(\upd{s}{x}{\seme{E}s}) D(s)
=  \sum_{s \in \fulls} F(s) D(s).
\end{eqnarray*}
\item
For the case $S = \rassignv{x}{\psi}$,
the claim follows from the calculation
\begin{eqnarray*}
\sum_{s' \in \fulls} F'(s') D'(s')
& = &
\sum_{s' \in \fulls} F'(s') \left( \psi(s'(x))
\sum_{s \in \fulls\ \mid\ s' = \upd{s}{x}{s'(x)}} D(s) \right)
\\ & = &
\sum_{s,s' \in \fulls\ \mid\ s' = \upd{s}{x}{s'(x)}} 
F'(s') \psi(s'(x))D(s)
\\ & = &
\sum_{s \in \fulls}\sum_{z \in \Zz}\ \sum_{s' \in \fulls\ \mid\ s' = \upd{s}{x}{z}}
F'(s') \psi(s'(x)) D(s)
\\ & = &
\sum_{s \in \fulls}\sum_{z \in \Zz}
F'(\upd{s}{x}{z}) \psi(z) D(s)
 = 
\sum_{s \in \fulls} D(s) \sum_{z \in \Zz} F'(\upd{s}{x}{z}) \psi(z)
\\ & = &
\sum_{s \in \fulls} D(s) F(s).
\end{eqnarray*}
\item
For the case $S = \observev{B}$, 
the claim follows from the calculation
\begin{eqnarray*}
\sum_{s \in \fulls} F'(s) D'(s) & = & 
\sum_{s \in \fulls} F'(s) \selectf{B}(D)(s) 
= 
\sum_{s \in \fulls\ \mid\ \seme{B}s} F'(s) D(s) \\
& = &
\sum_{s \in \fulls} F(s)D(s).
\end{eqnarray*}
\end{itemize}
To handle the composite cases,
we need to ensure that the equation
$\hpd{\omega}{v}{v'} = \hpd{\omega}{v}{v''}; \hpd{\omega}{v''}{v'}$
(and similarly for $\omega_k$)
holds for all $(v,v''), (v'',v') \in \PD$.
(Here ``$;$'' denotes function composition: $(f ; g)(x) = g(f(x))$.)
Observe that from case~\ref{evalk-transitive} in 
Definition~\ref{def:HHX} we know only that the equation holds when
$v'' = \fppd{v}$.
\begin{lemma}
\label{lem:k-seq-comp}
Given a pCFG, let $\HHH$ 
be given as in Definition~\ref{def:HHX}.
For all $(v,v_1), (v_1,v_2) \in \PD$, and
for all $h_0 \in \PD \rightarrow \Dist \rightarrow \Dist$,
with $h = \HHH(h_0)$ we have
\[
\hh{v}{v_2} = \hh{v}{v_1} ; \hh{v_1}{v_2}.
\]
\end{lemma}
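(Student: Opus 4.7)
The plan is to prove the identity by induction on the measure $\LAP{v}{v_2}$, exploiting the fact (Lemma~\ref{lem:LAP-add}) that this measure decomposes additively along a postdomination chain, so it strictly decreases when we shave off an initial segment via the first proper postdominator.

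First I would dispose of the trivial base cases: if $v = v_1$ or $v_1 = v_2$, then clause 1 of Definition~\ref{def:HHX} tells us one of $\hh{v}{v_1}$, $\hh{v_1}{v_2}$ is the identity and the composition claim is immediate. So assume $v$, $v_1$, $v_2$ are pairwise distinct, i.e., $v_1$ is a proper postdominator of $v$ and $v_2$ is a proper postdominator of $v_1$. Set $v'' = \fppd{v}$ (well-defined since $v \neq \finalv$ as $v$ has a proper postdominator). Since $v''$ is the first proper postdominator of $v$, either $v'' = v_1$ or $v'' \prec v_1$ in the total ordering of Lemma~\ref{lem:prec-ordering}, and in either case $v'' \prec v_2$ so $v_2 \neq v''$.

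Case A ($v_1 = v''$): This is exactly the situation covered by clause~\ref{evalk-transitive} of Definition~\ref{def:HHX} applied to the pair $(v, v_2)$: it gives $\hh{v}{v_2}(D) = \hh{v''}{v_2}(\hh{v}{v''}(D)) = \hh{v_1}{v_2}(\hh{v}{v_1}(D))$, which is the desired composition.

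Case B ($v'' \prec v_1$): Then $\LAP{v}{v''} > 0$ and, by Lemma~\ref{lem:LAP-add}, $\LAP{v''}{v_2} = \LAP{v}{v_2} - \LAP{v}{v''} < \LAP{v}{v_2}$. Apply clause~\ref{evalk-transitive} twice: once to $(v, v_2)$ (legal since $v_2 \neq v''$) to get $\hh{v}{v_2} = \hh{v}{v''}\,;\,\hh{v''}{v_2}$, and once to $(v, v_1)$ (legal since $v_1 \neq v''$) to get $\hh{v}{v_1} = \hh{v}{v''}\,;\,\hh{v''}{v_1}$. Verify that $(v'', v_1) \in \PD$ (every path from $v''$ to $\finalv$ prepends to a path from $v$ to $\finalv$, which must contain $v_1$ after $v''$), so the induction hypothesis applies to $(v'', v_1, v_2)$, yielding $\hh{v''}{v_2} = \hh{v''}{v_1}\,;\,\hh{v_1}{v_2}$. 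Substituting this into the first decomposition and then collapsing $\hh{v}{v''}\,;\,\hh{v''}{v_1}$ via the second gives $\hh{v}{v_2} = \hh{v}{v_1}\,;\,\hh{v_1}{v_2}$.

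The main obstacle is the one buried in Case B: ensuring that the postdomination relation $(v'', v_1) \in \PD$ holds so that the induction hypothesis is even applicable, and confirming that the measure $\LAP{v''}{v_2}$ is genuinely smaller. Both facts rest on Lemma~\ref{lem:prec-ordering} and Lemma~\ref{lem:LAP-add}, so once those are in hand the rest of the argument is a short algebraic rearrangement of compositions obtained from clause~\ref{evalk-transitive}.
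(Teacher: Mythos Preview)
Your proof is correct and follows essentially the same route as the paper: peel off $\fppd{v}$ via clause~\ref{evalk-transitive}, apply the induction hypothesis to the shorter remaining segment, and reassemble. The only cosmetic difference is the induction measure: the paper inducts on $\LAP{v}{v_1}$ rather than your $\LAP{v}{v_2}$, but since both strictly decrease when passing from $v$ to $\fppd{v}$ (Lemma~\ref{lem:LAP-add}), either choice works and the resulting argument is the same.
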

\begin{prooff}
The claim is by induction in $\LAP{v}{v_1}$.
If $v_1 = v$, the claim is obvious (as then
$\hh{v}{v_1}$ is the identity);
if $v_1 = \fppd{v}$, the equality follows from the definition
of $\HHH$.

So assume that with $v_0 = \fppd{v}$ we have $v_1 \neq v$ and $v_1 \neq v_0$.
By Lemma~\ref{lem:LAP-add}, $\LAP{v_0}{v_1} < \LAP{v}{v_1}$.
Inductively, we thus have
$\hh{v_0}{v_2} = \hh{v_0}{v_1} ; \hh{v_1}{v_2}$
which gives us the desired result:
\[
\hh{v}{v_2} =
\hh{v}{v_0} ; \hh{v_0}{v_2} =
\hh{v}{v_0} ; (\hh{v_0}{v_1}; \hh{v_1}{v_2}) =
\hh{v}{v_1} ; \hh{v_1}{v_2}.
\]
\end{prooff}
\begin{lemma}
\label{lem:fixed-seq-comp}
Given a pCFG, let $\omega$ 
be as in Definition~\ref{def:omega}.
For all $(v,v_1), (v_1,v_2) \in \PD$:
\[
\hpd{\omega}{v}{v_2} = \hpd{\omega}{v}{v_1} ; \hpd{\omega}{v_1}{v_2}.
\]
\end{lemma}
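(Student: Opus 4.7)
The plan is to exploit the fact that $\omega$ is a fixed point of $\HHH$. By definition, $\omega = \fixed{\HHH}$, so $\HHH(\omega) = \omega$. Lemma~\ref{lem:k-seq-comp} was stated for an arbitrary $h_0 \in \PD \rightarrow \Dist \rightarrow \Dist$ and its image $h = \HHH(h_0)$; the proof would simply instantiate that lemma with $h_0 = \omega$, so that $h = \HHH(\omega) = \omega$, and read off the conclusion
\[
\hpd{\omega}{v}{v_2} = \hpd{\omega}{v}{v_1};\hpd{\omega}{v_1}{v_2}.
\]

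I do not expect any obstacle: the whole purpose of phrasing Lemma~\ref{lem:k-seq-comp} for an arbitrary argument $h_0$ to the functional $\HHH$ was precisely so that it could be specialized to any fixed point. The intermediate inductive machinery (on $\LAP{v}{v_1}$, unfolding through $\fppd{v}$ via Lemma~\ref{lem:LAP-add}) has already been absorbed into Lemma~\ref{lem:k-seq-comp}, so nothing new needs to be shown here.

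As a sanity check one could also give an alternative route, approaching $\omega$ from below: for each $k \geq 1$ we have $\omega_k = \HHH(\omega_{k-1})$, hence Lemma~\ref{lem:k-seq-comp} applies to each $\omega_k$, yielding $\hpd{\omega_k}{v}{v_2} = \hpd{\omega_k}{v}{v_1};\hpd{\omega_k}{v_1}{v_2}$. One would then pass to the limit in $k$, using continuity of each $\hpd{\omega_k}{v_1}{v_2}$ (Lemma~\ref{lem:hcontH}) together with Lemma~\ref{lem:cont-cpo} to commute the limit with composition. This detour is not needed for the proof, but it confirms that the claim is compatible with the limit construction of $\omega$ in Definition~\ref{def:omega}.
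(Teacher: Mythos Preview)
Your proposal is correct and matches the paper's own proof exactly: the paper simply notes that the claim follows from Lemma~\ref{lem:k-seq-comp} since $\omega = \HHH(\omega)$, which is precisely your instantiation of $h_0 = \omega$. The alternative limit-based argument you sketch is unnecessary (as you yourself observe) but sound.
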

\begin{proof}
This follows from Lemma~\ref{lem:k-seq-comp} since
$\omega = \HHH(\omega)$.
\end{proof}
We now resume the proof of Theorem~\ref{thm:semantics-consistent},
giving the 3 composite cases. In each of them, we exploit that
if a pCFG $G_1$ with $\startv$ node $v_1$ and unlabeled
$\finalv$ node $v'_1$ is a subgraph of a pCFG $G$, where in $G$,
$v'_1$ has been labeled and an outgoing edge added, then
$\hpd{\omega}{v_1}{v'_1} = \hpd{\omega_1}{v_1}{v'_1}$
where $\omega$ is the meaning (cf.~Definition~\ref{def:omega})
of $G$, and $\omega_1$ is the meaning of $G_1$.
This follows since Definition~\ref{def:HHX} does not make
use of the label of $v'$ when defining $\hh{v}{v'}$.

\begin{itemize}
\item
For the case $S = \seqc{S_1}{S_2}$,
recall that the pCFG $G = \trl(S)$, with $\startv$ node $v$
and $\finalv$ node $v'$, is constructed 
by taking the union of 
$\trl(S_1)$, with $\startv$ node $v$ and $\finalv$ node $v'_1$,
and $\trl(S_2)$, with $\startv$ node $v_2$ and $\finalv$ node $v'$,
and then giving
$v'_1$ the label $\skipv$,
and adding an edge from $v'_1$ to $v_2$.
In $G$ it is thus the case that
$v'$ postdominates $v_2$, $v_2 = \fppd{v'_1}$, and 
$v'_1$ postdominates $v$.
Hence we infer, by Lemma~\ref{lem:fixed-seq-comp}, 
that there exists $D''$ such that
\[
D'' = \hpd{\omega}{v}{v'_1}(D) \mbox{ and } D' = \hpd{\omega}{v_2}{v'}(D'').
\]
From Figure~\ref{fig:struct-sem}, we see that that there exists 
$F''$ such that $F'' = \semc{S_2}F'$ and $F = \semc{S_1}F''$.
By applying the induction hypothesis to first $S_2$ and next $S_1$,
we get the desired
\[
\sum_{s \in \fulls} F'(s) D'(s) =
\sum_{s \in \fulls} F''(s) D''(s) =
\sum_{s \in \fulls} F(s) D(s).
\]

\item
For the case $S = \ifc{B}{S_1}{S_2}$, 
recall that the pCFG $G = \trl(S)$ is constructed 
by taking the union of 
$\trl(S_1)$, with $\startv$ node $v_1$ and $\finalv$ node $v'_1$,
and $\trl(S_2)$, with $\startv$ node $v_2$ and $\finalv$ node $v'_2$,
and then letting the $\startv$ node
$v$ be a branching node 
with condition $B$ and
$\true$-successor $v_1$ and $\false$-successor $v_2$,
and for each of $v'_1$ and $v'_2$: give it the label $\skipv$,
and add an edge to the $\finalv$ node $v'$.
In $G$ it is thus the case that $v' = \fppd{v}$.

Hence, with $D_1 = \selectf{B}(D)$ and $D_2 = \selectf{\neg B}(D)$,
and with $D'_1 = \hpd{\omega}{v_1}{v'_1}(D_1)$ and
$D'_2 = \hpd{\omega}{v_2}{v'_2}(D_2)$, 
by Lemma~\ref{lem:fixed-seq-comp}
we have the calculation
\[
D' =  \hpd{\omega}{v}{v'}(D) 
=
\hpd{\omega}{v_1}{v'}(D_1) + \hpd{\omega}{v_2}{v'}(D_2)
=
\hpd{\omega}{v_1}{v'_1}(D_1) +
\hpd{\omega}{v_2}{v'_2}(D_2)
=
D'_1 + D'_2.
\]
From Figure~\ref{fig:struct-sem} we see that 
$F(s) = \semc{S_1}(F')(s)$ if $\seme{B}s$ holds,
and $F(s) = \semc{S_2}(F')(s)$ otherwise.

By applying the induction hypothesis to $S_1$ and $S_2$,
the claim now follows from the calculation
\begin{eqnarray*}
\sum_{s \in \fulls} F'(s) D'(s) & = &
\sum_{s \in \fulls} F'(s) D'_1(s) +
\sum_{s \in \fulls} F'(s) D'_2(s)
\\ & = &
\sum_{s \in \fulls} (\semc{S_1}(F')(s) \cdot D_1(s)) +
\sum_{s \in \fulls} (\semc{S_2}(F')(s) \cdot D_2(s))
\\ & = &
\sum_{s \in \fulls\ \mid\ \seme{B}s} F(s) D(s) +
\sum_{s \in \fulls\ \mid\ \seme{\neg B}s} F(s) D(s)
= \sum_{s \in \fulls} F(s) D(s).
\end{eqnarray*}
\item
For the case $S = \whilec{B}{S_1}$, 
recall that the pCFG $G = \trl(S)$, with $\startv$ node $v$
and $\finalv$ node $v'$, is constructed 
by augmenting
$\trl(S_1)$, with $\startv$ node $v_1$ and $\finalv$ node $v'_1$,
as follows:
let $v$ be a branching node 
with condition $B$ and
$\true$-successor $v_1$ and $\false$-successor $v'$,
and add an edge from $v'_1$ (now labeled $\skipv$) to $v$.
In $G$ it is thus the case that $v' = \fppd{v}$,
and that $v$ postdominates $v'_1$ which postdominates $v_1$;
moreover, $\LAP{v}{v'} = 1 < \LAP{v_1}{v'}$.
For each $k \geq 0$ we thus have the calculation
\begin{eqnarray*}
  & & \hpd{\omega_{k+1}}{v}{v'}(D) \\
 & = & \hpd{\omega_{k+1}}{v'}{v'}(\selectf{\neg B}(D))
    + \hpd{\omega_k}{v_1}{v'}(\selectf{B}(D))
\\ & = &
   \selectf{\neg B}(D) + 
\hpd{\omega_k}{v}{v'}(\hpd{\omega_k}{v_1}{v'_1}(\selectf{B}(D))).
\end{eqnarray*}
where the first equality follows by clause~\ref{def:evalk:cond}
in Definition~\ref{def:HHX},
and where the second equality is obvious if $k = 0$ and
otherwise follows by 
Lemma~\ref{lem:k-seq-comp}
(since $\hpd{\omega_k}{v'_1}{v}$ is the identity).

For our proof, it is convenient to define a chain $\chain{g_k}{k}$ 
of functions in $D \contarrow D$ by stipulating
\begin{eqnarray*}
         g_0(D) & = & 0 \\[1mm]
         g_{k+1}(D) & = & 
\selectf{\neg{B}}(D) + g_k(\hpd{\omega}{v_1}{v'_1}(\selectf{B}(D))).
\end{eqnarray*}
Observe that for all $k$, and all $D$, we have 
\begin{equation}
\label{eq:gk-upper}
    \hpd{\omega_k}{v}{v'}(D) \leq g_k(D).
\end{equation}
This is trivial for $k = 0$, and for the inductive step we have
\begin{eqnarray*}
\hpd{\omega_{k+1}}{v}{v'}(D) & = &
\selectf{\neg{B}}(D) +
\hpd{\omega_k}{v}{v'}(\hpd{\omega_k}{v_1}{v'_1}(\selectf{B}(D)))
\\ & \leq &
\selectf{\neg{B}}(D) +
g_k(\hpd{\omega}{v_1}{v'_1}(\selectf{B}(D)))
\\ & = &
g_{k+1}(D).
\end{eqnarray*}
For all $k$, and all $D$, we also have
\begin{equation}
\label{eq:gk-lower}
g_k(D) \leq \limit{k}{\hpd{\omega_k}{v}{v'}(D)}.
\end{equation}
For $k = 0$ this is obvious, and for
the inductive step we have
\begin{eqnarray*}
g_{k+1}(D) & = & 
\selectf{\neg B}(D) +  g_k(\hpd{\omega}{v_1}{v'_1}(\selectf{B}(D)))
\\ & = &
\selectf{\neg B}(D) +  g_k(\limit{k}{\hpd{\omega_k}{v_1}{v'_1}(\selectf{B}(D))})
\\ & \leq &
\selectf{\neg B}(D) +  \limit{k}{\hpd{\omega_k}{v}{v'}(\limit{k}{\hpd{\omega_k}{v_1}{v'_1}(\selectf{B}(D))})}
\\ & = &
\selectf{\neg B}(D) +  \limit{k}{\hpd{\omega_k}{v}{v'}(\hpd{\omega_k}{v_1}{v'_1}(\selectf{B}(D)))}
\\ & = &
\limit{k}{(\selectf{\neg B}(D) + \hpd{\omega_k}{v}{v'}(\hpd{\omega_k}{v_1}{v'_1}(\selectf{B}(D))))}
\\ & = &
\limit{k}{\hpd{\omega_{k+1}}{v}{v'}(D)}.
\end{eqnarray*}
From (\ref{eq:gk-upper}) and (\ref{eq:gk-lower})
we see that for all $k$, and all $D$, we have
\[
\hpd{\omega_k}{v}{v'}(D) \leq g_k(D) \leq \limit{k}{\hpd{\omega_k}{v}{v'}(D)}
\]
from which we infer that
\begin{equation}
\label{eq:consistent1}
\mbox{for all } D:
\limit{k}{g_k(D)} = \limit{k}{\hpd{\omega_k}{v}{v'}(D)}.
\end{equation}
The virtue of working on $g_k$ rather than on $\omega_k$ is that
we can then prove the following result:
\begin{equation}
\label{eq:consistent2}
\mbox{for all $k$ and $D$},\ 
\sum_{s \in \fulls}F_k(s) \cdot D(s) = \sum_{s \in \fulls}F'(s) \cdot g_k(D)(s).
\end{equation}
where in Figure~\ref{fig:struct-sem} we defined $F_k$ as follows:
\begin{eqnarray*}
F_0(s) & = & 0 \\[1mm]
F_{k+1}(s) & = & \semc{S_1}(F_k)(s) \mbox{ if } \seme{B}s \\[1mm]
F_{k+1}(s) & = & F'(s) \mbox{ otherwise.}
\end{eqnarray*}
The proof of (\ref{eq:consistent2})
is by induction in $k$, where the base case $k = 0$ is obvious
as $F_0 = 0 = g_0(D)$.
For the inductive step, we have the calculation
\begin{eqnarray*}
& & \sum_{s \in \fulls}F_{k+1}(s) \cdot D(s)  \\
& = &
\sum_{s \in \fulls\ \mid\ \seme{B}s} \semc{S_1}(F_k)(s) \cdot D(s)
  +
\sum_{s \in \fulls\ \mid\ \seme{\neg B}s} F'(s) \cdot D(s)
\\ & = &
\sum_{s \in \fulls} \semc{S_1}(F_k)(s) \cdot \selectf{B}(D)(s)
  +
\sum_{s \in \fulls} F'(s) \cdot \selectf{\neg B}(D)(s)
\\ & = &
\sum_{s \in \fulls} F_k(s) \cdot \hpd{\omega}{v_1}{v'_1}(\selectf{B}(D))(s)
  +
\sum_{s \in \fulls} F'(s) \cdot \selectf{\neg B}(D)(s)
\\ & = &
\sum_{s \in \fulls} F'(s) \cdot g_k(\hpd{\omega}{v_1}{v'_1}(\selectf{B}(D)))(s)
  +
\sum_{s \in \fulls} F'(s) \cdot \selectf{\neg B}(D)(s)
\\ & = &
\sum_{s \in \fulls} F'(s) \cdot \left(g_k(\hpd{\omega}{v_1}{v'_1}(\selectf{B}(D))) + 
\selectf{\neg B}(D)\right)(s)
\\ & = &
\sum_{s \in \fulls} F'(s) \cdot g_{k+1}(D)(s) 
\end{eqnarray*}
where the 3rd equality comes from applying the outer (structural) induction hypothesis
to $S_1$ while the 4th equality comes from the inner induction
hypothesis (in $k$).

Since $F(s) = \limit{k}{F_k(s)}$ (by Figure~\ref{fig:struct-sem}),
the desired claim now follows from
\begin{eqnarray*}
\sum_{s \in \fulls} F(s) D(s)
& = & 
\sum_{s \in \fulls} \limit{k}{F_k(s) D(s)} \\
(\mbox{Lemma \ref{lem:sumlim-limsum}}) & = &
\limit{k}{\sum_{s \in \fulls} F_k(s) D(s)} \\
(\mbox{by (\ref{eq:consistent2})}) & = &
\limit{k}{\sum_{s \in \fulls} F'(s) g_k(D)(s)} \\
(\mbox{Lemma \ref{lem:sumlim-limsum}}) & = &
\sum_{s \in \fulls} \limit{k}{F'(s) g_k(D)(s)} \\
(\mbox{by (\ref{eq:consistent1})}) & = &
\sum_{s \in \fulls} F'(s) \limit{k}{\hpd{\omega_k}{v}{v'}(D)} \\
& = & \sum_{s \in \fulls} F'(s) D'(s).
\end{eqnarray*}

\end{itemize}

\section{Conclusion and Related Work}
\label{sec:conclude}
We have considered control flow graphs for probabilistic
imperative programs,
and developed a fixed-point based operational semantics
(which in a companion paper~\cite{Amt+Ban:ProbSlicing-2017}
we have used to reason about the correctness of
slicing such graphs).

We have stated and proved an adequacy result that
shows that for control flow graphs that are translations
of programs in a structured probabilistic language, our semantics is suitably
related to that language's denotational semantics
as presented in Gordon~\etal~\cite{Gor+etal:ICSE-2014} 
and Hur~\etal~\cite{Hur+etal:PLDI-2014}
(augmenting, in particular to handle conditioning,
early work by Kozen~\cite{Kozen:JCSS-81}).
In future work on probabilistic imperative programs, one thus has
the freedom to choose the semantics that best fits the given purpose.

\bibliographystyle{abbrv} 
\bibliography{bib1}

\end{document}